\documentclass{article}

\usepackage{arxiv}

\usepackage[utf8]{inputenc} % allow utf-8 input
\usepackage[T1]{fontenc}    % use 8-bit T1 fonts
\usepackage{hyperref}       % hyperlinks
\usepackage{url}            % simple URL typesetting
\usepackage{booktabs}       % professional-quality tables
\usepackage{amsfonts}       % blackboard math symbols
\usepackage{nicefrac}       % compact symbols for 1/2, etc.
\usepackage{microtype}      % microtypography
\usepackage{xcolor}         % colors
\usepackage{amsmath}
\usepackage{amsthm}
\usepackage{comment}

\usepackage{tikz}
\usetikzlibrary{shapes.geometric, positioning, arrows.meta, fit}

\newtheorem{alg}{Mechanism}
\usepackage{cleveref}

\newtheorem{definition}{Definition}
\newtheorem{lemma}{Lemma}

\newtheorem{theorem}{Theorem}
\newtheorem{proposition}{Proposition}
\newtheorem{corollary}{Corollary}
\newtheorem{observation}{Observation}

\title{Position Auctions with a Capacity Constraint}

\author{
 Eleni Batziou \\
  School of Computer Science and Informatics \\
  University of Liverpool\\
  United Kingdom \\
  \texttt{eleni.batziou@liverpool.ac.uk} \\
   \And
 Georgios Birmpas \\
  School of Computer Science and Informatics \\
  University of Liverpool\\
  United Kingdom \\
  \texttt{g.birmpas@liverpool.ac.uk} \\
  \And
 Georgios Chionas \\
  School of Computer Science and Informatics \\
  University of Liverpool\\
  United Kingdom \\
  \texttt{g.chionas@liverpool.ac.uk} \\
\And
 Piotr Krysta \\
  School of Computer and Cyber Sciences \\
  Augusta University \\
  Georgia, USA \\
  \texttt{pkrysta@augusta.edu} \\    
}

\date{}

\begin{document}
\newcommand{\edp}{$\e$-differential privacy }
\newcommand{\eddp}{$(\e, \delta)$-differential privacy }
\newcommand{\edpm}{$\e$-differential private mechanism}
\newcommand{\e}{\epsilon}
\newcommand{\EXP}{\text{EXP}}
\newcommand{\an}{$\alpha n$}
\newcommand{\uout}{$u^{\text{out}}$}
\newcommand{\uinf}{$u^{\text{inf}}$}

\newcommand{\bfx}{\mathbf{x}}
\newcommand{\bfp}{\mathbf{p}}
\newcommand{\bfq}{\mathbf{q}}
\newcommand{\tfm}{(\allocs,\prices,\burn)}
\newcommand{\fpa}{(\allocs^f,\prices^f,\burns^f)}

\newcommand\fnsep{\textsuperscript{,}}

\newcommand{\bid}{b}
\newcommand{\bids}{{\mathbf \bid}}
\newcommand{\bidsmt}{{\mathbf \bid}_{-i}}
\newcommand{\bidt}[1][t]{{\bid_{#1}}}

\newcommand{\epnebid}{b^*}

\newcommand{\val}{v}
\newcommand{\vals}{{\mathbf \val}}
\newcommand{\valsmt}{{\mathbf \val}_{-i}}
\newcommand{\valt}[1][t]{{\val_{#1}}}

\newcommand{\alloc}{x}
\newcommand{\allocs}{{\mathbf \alloc}}
\newcommand{\allocsmt}{\allocs_{-i}}
\newcommand{\alloct}[1][t]{\alloc_{#1}}

\newcommand{\price}{p}
\newcommand{\prices}{{\mathbf \price}}
\newcommand{\pricet}[1][t]{\price_{#1}}

\newcommand{\burn}{q}
\newcommand{\burns}{{\mathbf \burn}}
\newcommand{\burnt}[1][t]{\burn_{#1}}

\newcommand{\cost}{\gamma}
\newcommand{\oca}{(\bids,\allocs,\bm{\tau})}

\newcommand{\nine}{*}
\newcommand{\ninetfm}{(\allocs^{\nine},\prices^{\nine},\burns^{\nine})}

\newcommand{\history}{\mathbf{H}}

\newcommand{\blocks}{\mathcal{B}}
\newcommand{\blockset}{\mathcal{B}}
\newcommand{\vbp}{v_{BP}}
\newcommand{\vbphat}{\hat{v}_{BP}}
\newcommand{\basefee}{r}

\newcommand{\Rplus}{\mathbb{R}_{+}}

\newcommand{\OPT}{\text{OPT}}
\newcommand{\SOL}{\text{SOL}}
\newcommand{\REV}{\text{REV}}
\newcommand{\SURPLUS}{\text{SURPLUS}}
\newcommand{\ALG}{\text{ALG}}
\newcommand{\Sgr}{S_{\text{greedy}}}

\newcommand{\al}{\alpha}

\newcommand{\xopt}{x^{\text{OPT}}}

\newcommand{\Wt}{W^{(t)}}

\newcommand{\vminus}{\mathbf{v}_{-i}}

\newcommand{\MAX}{\text{MAX}}

\newcommand{\spaces}{\mathbf{W}}

\newcommand{\allocation}{\mathbf{x}} % changed from A, there's also another notation under \allocs

\maketitle

\begin{abstract}
Sponsored search auctions are commonly modeled as an assignment of a fixed set of slots (positions) to a set of advertisers, with welfare maximization being reducible to a standard matching problem. Motivated by modern ad formats, we study a richer variant of the classical position auctions model, in which ads have heterogeneous sizes and the platform must jointly select and assign a subset of ads to positions subject to a global space constraint. We formulate this as a matching problem with a capacity constraint, and propose an algorithmic technique that goes beyond simple greedy methods while achieving constant factor approximation guarantees. Our allocation rule augments density-based ordering with capacity-aware local improvements, which allow for re-allocations that improve welfare, while respecting the capacity constraint.

Applied in the context of position auctions, we analyze this mechanism under the assumption of single-parameter agents and position-dependent click-through-rates (CTRs). We show that a minor modification to our approach yields a universally truthful randomized mechanism with a constant factor approximation guarantee. To the best of our knowledge, this is the first truthful constant-approximation mechanism for this variant of capacity-constrained matching.
\end{abstract}

%\keywords{Sponsored Search, Ad Auctions, Approximation Algorithms, Truthfulness}

%%%%%%%%%%%%%%%%%%%%%%%%%
\section{Introduction}\label{sec:int}
%%%%%%%%%%%%%%%%%%%%%%%%%

Online advertising platforms are responsible for allocating scarce resources, such as impressions and clicks, to advertisers seeking to maximize their utility. A canonical instance is that of \textit{sponsored search (keyword) auctions} where a user's query triggers an auction, in which ads compete for assignment to positions (slots) on a result page. Search advertising generates billions in revenue globally, making the design of efficient and truthful mechanisms a key problem from an economic, as well as theoretical perspective.
% Sponsored search is a central application of auction theory and mechanism design.
In the standard position auction model, introduced in the seminal papers of Varian \cite{Varian07Position} and Edelman et al. \cite{Edelman2007}, the search platform allocates a fixed set of slots to advertisers and charges payments based on submitted bids. Early models assume that each ad occupies a fixed amount of page space, leading to the standard abstraction for position auctions: unit-demand agents compete for inclusion and assignment to positions, where the realized value of an allocated agent is a product of her private value and the associated publicly-known click-through rate (CTR) of the given position. Determining the welfare-maximizing allocation reduces to a standard bipartite matching problem \cite{gale1962college, shapley1971assignment}.

While this abstraction has been fundamental in the sponsored search literature, it no longer sufficiently captures the complexity of current advertising environments. Modern ad formats feature heterogeneous sizes and layouts that compete for inclusion, extending the original model to \textit{rich ads} \cite{Cavallo17RichAds, Hartline2018fastRichAds, Ghiasi19RichAds, AggarwalBMM022}. 
In this richer setting, the platform is faced with the problem of jointly selecting a subset of ads that fits within the page capacity, while assigning advertisers to slots in a way that maximizes total welfare.

This variant of the problem has a range of applications. Inspired by advances in generative AI, a stream of recent work considers extended position auction settings where buyers compete for the placement of sponsored creatives within AI-generated commercial content, such as ads embedded within LLM-generated summaries \cite{Balseiro2025, Dubey2024}. In these settings, %similar to our, 
ads have heterogeneous formats  %\emph{position-dependent} CTRs 
and must fit within a constrained display environment.

Considering this problem from an optimization viewpoint, instead of a simple bipartite matching, the winner determination problem (WDP) must now be formulated as a special version of \emph{budgeted} bipartite matching\footnote{In this work, we refer to this problem as \emph{capacity-constrained} matching, or matching with a \emph{capacity constraint}. The reason for this is the context of position auctions, where the budget constraint is translated to the capacity restrictions that exist.}, which is known to be NP-hard via a reduction from the knapsack problem \cite{Berger11BudgetedMatching}. Given that our work examines the problem from a game-theoretic perspective, our goal is the design of mechanisms that are incentive-compatible and thus the central challenge that we address is the following:

\begin{quote}
 \emph{Can we design truthful mechanisms for the capacity-constrained matching, that are both computationally efficient, and provide strong guarantees on social welfare? }  
\end{quote}

Motivated by the described setting, we study the bipartite matching problem under a capacity constraint. Without considering incentives, we first propose an efficient %randomized
algorithm that provides a constant-factor approximation to the social welfare, for the general version of the problem. We then move to the position auctions setting, and study the problem from a game-theoretic perspective. In particular, we adopt the standard assumptions of single-parameter agents, and publicly known ad sizes and CTRs. The goal is to design mechanisms that are polynomial-time computable and achieve a good approximation to the social welfare, while being incentive-compatible, which ensures that agents cannot benefit from misreporting their true values. We show that a small modification to our original algorithm, can turn it to a \textit{universally truthful} mechanism that still achieves constant-factor approximation, and thus combines the two key properties in auction design.

%%%%%%%%%%%%%%%%%%%%%%%%%%%%%%%
\subsection{Contributions}\label{sec:con}
%%%%%%%%%%%%%%%%%%%%%%%%%%%%%%%

We study position auctions augmented with a global capacity constraint. Our motivation comes from online advertising auctions where participants compete both for inclusion and priority in a limited capacity page with a fixed number of available positions. We formulate the problem as a capacity-constrained matching between %single-parameter 
agents and positions, where each agent has a single ad,
%that their ads are assigned to in the total page space
while allowing for different ad sizes.
%\footnote{As each agent has a single ad, this can also be seen as a matching between agents and positions.}. 
Our goal is to design simple truthful mechanisms that achieve good approximation guarantees with respect to the social welfare.

In Section \ref{sec:alg}, we discuss why existing approaches fail to achieve meaningful guarantees, and we propose a novel %randomized 
algorithm for the general version of capacity-constrained bipartite matching that yields a 6-approximation of the optimal social welfare. Our algorithm combines simple greedy selection with adaptive reassignment to correct allocation errors. We additionally show a lower bound of 3 on the achievable approximation ratio. %by this algorithm. 

In Section \ref{sec:mech}, we study the problem from a mechanism design perspective. Agents have private values for being included in the solution and are thus, single-parameter. Moreover, ad sizes are public and each position has a publicly known CTR, same for each agent, that captures the expected number of clicks that an ad placed there will receive. The number of positions is fixed and the total size of winning ads must respect the global capacity constraint. Note that, in the mechanism design setting, on one hand, we impose a stricter form on the agents' realized values, compared to the algorithmic setting, since now all agents share the same ranking over positions. On the other hand, we require the mechanism to be monotone. 
We first discuss why the algorithm introduced in \Cref{sec:alg} is not monotone (and thus cannot be directly transformed to a truthful mechanism). We then introduce a critical modification, incurring only a marginal loss in the approximation factor, that renders this new version of the algorithm monotone, and therefore truthful, when coupled with the  appropriate payment rule \cite{Myerson}.

To the best of our knowledge, this is the first work to study this variant of the capacity-constrained matching problem, %in a setting with single-parameter strategic agents,
and give constant approximation guarantees with a monotone algorithm. 
In fact, our algorithm can be viewed as a variant of greedy-guided local search, that maintains a feasible solution and performs local improvements. Local search algorithms are notoriously non-monotone. In our case, performing a search that is governed by a global heuristic, instead of exhaustively exploring the neighborhood, is one of the key components that guarantee monotonicity.

\newpage

%%%%%%%%%%%%%%%%%%%%%%%%%%%%%%%%%
\subsection{Related work}\label{sec:rew}
%%%%%%%%%%%%%%%%%%%%%%%%%%%%%%%%%

\paragraph{Sponsored Search Auctions.}
 The works of Varian \cite{Varian07Position} and Edelman et al. \cite{Edelman2007} introduce position auctions and study the equilibria of Generalized Second Price auctions (GSP). The early literature on sponsored search auctions assumes that click-through rates (CTRs) are publicly known. This has initiated a long line of work on position auctions with applications in sponsored search \cite{Athey2011}.

 Subsequent work has generalized the GSP approach and considered its performance beyond the standard model  \cite{Ghosh07Conciseness,Cavallo14Independent}. This line of work studies position auctions with different CTRs across advertisers. Moreover, Deng et al. \cite{Deng10RichAds} were the first to study a \textit{rich ad problem} where each ad could occupy multiple slots. Cavallo et al. \cite{Cavallo17RichAds} also considers the rich ad problem combined with constraints on the number of ads, keeping the problem single-parameter since advertisers cannot misreport their set of rich ads. In fact, even though it is not explicitly specified, the underlying optimization problem of Cavallo et al. \cite{Cavallo17RichAds} is a version of the budgeted bipartite matching. They propose a local search algorithm to configure the allocated ads, without, however, providing any theoretical guarantees on its performance. Additionally, they consider truthfulness, under the assumption that agents aim to maximize their utility and are indifferent to the payment, provided it does not exceed their value. %Works that regard alternatives of truthfulness include \cite{AggarwalAM2006}, while there are also approaches that consider efficiency and equilibrium properties of non-truthful mechanisms. We refer the reader to \cite{lahaie2007sponsored, Qin21} for extensive surveys.

Closely related to our work, Aggarwal et al. \cite{AggarwalBMM022} propose truthful  mechanisms for settings with heterogeneous size ads under knapsack constraints. The underlying optimization problem in their setting is the Multiple-Choice Knapsack Problem, which is a special case of the matching with a capacity constraint that we consider. The main distinction from our setting is that, in their work, the set of positions that each advertiser is interested in does not intersect with those of others. Therefore, agents do not compete for the same positions.
Finally, we note that budget constraints have been studied in the context of position auctions \cite{Lu2015, Ashlagi2010}. These works propose auction schemes akin to the GSP in the presence of personalized budgets for each advertiser. In contrast, the focus of our paper is on an overall budget (capacity) constraint that restricts the number of displayed ads, %that can be displayed, 
rather than agent-specific limits.

\paragraph{Knapsack Auctions.} 
Aggarwal and Hartline \cite{AggarwalH06} study the knapsack problem from a game-theoretic perspective, modeling a setting where agents with private values and publicly known sizes compete for space in a knapsack. Briest et al. \cite{BriestKV11} develop techniques for converting approximation algorithms into truthful mechanisms with bounded efficiency loss, applicable to knapsack allocation problems. Moreover, Grandoni et al. \cite{Grandoni2014utilitarian} consider the budgeted matching problem and devise FPTAS mechanisms, though by resorting only to truthfulness with high probability, and allowing bounded violations of the budget constraint. Last but not least, a related line of work studies greedy allocation rules, in particular forward greedy algorithms, which are attractive due to their simplicity and approximate efficiency. However, D\"utting et al. \cite{DuttingGR17} show that such mechanisms, and as such the special class of deferred-acceptance auctions, face limitations in achieving strong approximation guarantees in knapsack settings.

%%%%%%%%%%%%%%%%%%%%%%%%%%%%%%%%%%%%%%%%%
\section{Preliminaries}\label{sec:prelem}
%%%%%%%%%%%%%%%%%%%%%%%%%%%%%%%%%%%%%%%%%

We study the bipartite matching problem with a capacity constraint. Our setting consists of a set $N = \{1, \dots, n\}$ of agents (advertisers), each owning a single ad and a set $K = \{1, \dots, k\}$ of available positions. In the algorithmic problem, each agent $i \in N$ has a value $v_{ij} > 0$ for having her ad placed at position $j \in K$. Each ad has a size $s_i > 0$, and the page imposes a hard capacity constraint $W > 0$, namely the total size of included ads cannot exceed $W$. \footnote{Our problem can be seen as a special case of budgeted bipartite matching problem presented in \cite{Berger11BudgetedMatching}, where ad sizes are independent of the positions they are assigned to. Specifically, in a general bipartite graph, both values and sizes can vary between each edge.}. 

In the position auctions variant, we assumed that each agent $i \in N$ holds a private value $v_i > 0$ for having her ad included in the page, while her ad size $s_i > 0$ is publicly known. The domain is therefore single-parameter. Each position $j \in K$ has a publicly known click-through rate (CTR) $a_{j}\geq 0$, capturing the  expected number of clicks that an ad placed there receives. We assume CTRs are decreasing: $a_1 > a_2 > \dots > a_k$. If an ad $i$ is allocated at position $j$, then the \emph{realized value} of agent $i$ is $v_{ij} = v_i \cdot a_j$. We further define the density of an agent-position pair as $d_{ij} = v_{ij}/ s_i$. The feasibility structure of our problem is a capacity-constrained \emph{matching} between agents and positions.
%; each agent holds a single ad, can be assigned to at most one position, and each position can be assigned to at most position, and the total size of the assigned agents must not exceed the total capacity constraint.
Note that the position auctions setting restricts the algorithmic problem, as it implies that all agents share the same preferences over the orderings of the positions.

Each agent $i$ submits a bid $b_i$ for inclusion in the page, and we denote the full bid vector by $\bids = (b_1, b_2, \dots, b_n)$ and the bids of all agents except $i$ by $\bids_{-i}$. 
% The goal is to design a mechanism that takes a vector of bids and determines a subset of winning ads to include in the page, along with an allocation of each ad to a specific position.
% \footnote{We will use the term $b_{-i}$ to refer to the vector $(b_1, b_2, \dots, b_{i-1}, b_{i+1}, \dots, b_n)$ of all bids except $b_i$.} 
A mechanism is defined as a tuple $(\allocs, \prices)$, consisting of an allocation rule $\allocs : \mathbb{R}^{n}_{\ge 0} \to \{0,1\}^{n \times k}$,  
% $\allocs : \mathbb{R}^{n \times k}_{\ge 0} \to \{0,1\}^{n \times k}$ 
where $x_{ij} = 1$ indicates that ad $i$ has been placed at position $j$, along with a payment rule $\prices : \mathbb{R}^n_{\ge 0} \to \mathbb{R}^n_{\ge 0}$ with $p_i(\bids)$ denoting the price charged to agent $i$. 
The utility of each agent $i$ is quasilinear and defined as 
$u_i (\bids) = \sum_{j \in K} \{a_{j} \cdot v_i \cdot x_{ij} (\bids) \} - p_i(\bids)$. We restrict our attention to \emph{individually rational} mechanisms, where any agent not assigned a position pays 0 and receives non-negative utility.

Our primary objective is to maximize social welfare, defined as the total realized value of allocated agents:
$SW(\allocs) = \sum_{i \in N} \sum_{j \in K} v_{ij} \cdot x_{ij}$. Any feasible allocation must satisfy the capacity constraint $\sum_{i \in N} \sum_{j \in K} s_i \cdot x_{ij} \le W$.
%\footnote{This is also our objective when we consider the more general algorithmic version of the problem described earlier}.
We aim at designing efficient mechanisms that are truthful and achieve strong approximation guarantees with respect to the optimal social welfare. We now introduce some relevant definitions.

\begin{definition}[(Universally) Truthful Mechanisms]
A deterministic mechanism $(\allocs, \prices)$ is truthful if for every agent $i$ with true value $v_i$ and every $\bids_{-i}$
it holds that $u_i(v_i, \bids_{-i}) \ge u_i(b_i, \bids_{-i})$ for all $b_i \neq v_i$. A randomized mechanism is universally truthful if it is a probability distribution over deterministic truthful mechanisms.    
\end{definition}

%Truthfulness ensures that no agent can benefit from misreporting her true value, regardless of the bids of the other agents. 

\begin{definition}[Monotone Allocation Rule]

%An allocation rule $\allocs$ is monotone if, for a bidder $i$ and every fixed bid profile $b_{-i}$ of agents excluding $i$, declaring $b'_i \geq b_i$ gives $u_i (b'_i, b_{-i}) \geq u_i (b_i, b_{-i})$.

An allocation rule $\allocs$ is monotone if for every agent $i$ and fixed values $\bids_{-i}$, increasing $b_i$ cannot result in agent $i$ being assigned to a position of strictly lower realized value. 

%with valuations $v_i \ge v'_i$, it results in $u_i(v_i, v_{-i}) \ge u_i (v'_i, v_{-i})$.
\end{definition}

Since our domain is single-parameter, truthfulness reduces to showing monotonicity of the allocation rule. By Myerson's lemma \cite{Myerson}, every monotone allocation rule admits a unique payment scheme that makes the overall mechanism truthful.

%%%%%%%%%%%%%%%%%%%%%%%%%%%%%%%%%%%%%%%
\section{Matching with a Capacity Constraint: A Constant-Approximation Algorithm}\label{sec:alg}
%%%%%%%%%%%%%%%%%%%%%%%%%%%%%%%%%%%%%%%

The optimization problem underlying this setting is a knapsack problem augmented with matching constraints. A feasible solution must respect the global capacity constraint, as well as being a valid matching between agents and positions: each agent can be allocated to at most one position, and each position must be assigned to at most one agent. As shown in \Cref{app:limitation-existing}, this additional complexity causes natural greedy heuristics, such as density or value-based greedy, to perform poorly in our setting, resulting in an approximation ratio as bad as $\Omega(k)$.

Our approach goes beyond such greedy methods. We propose an algorithm that processes agent-position pairs in decreasing order of density, but treats assignments as tentative rather than permanent. This means that an agent assigned to a position during some step of the mechanism based on her density, might later be displaced by an agent of lower density but higher value for that position, provided that the latter fits in the current solution. This adaptive reassignment step enables the mechanism to correct suboptimal greedy choices. Finally, we randomize between the output of this algorithm and an algorithm that outputs the single highest-value feasible pair, coined $G(v_{max})$, and show that this yields a 6-approximation for the capacity-constrained matching. 

Below, we describe our mechanism in detail:

\begin{alg}
\label{alg:augm_greedy}
Consider a value profile $(\vals_1, \vals_2, \dots , \vals_n)$, where $\vals_i = (v_{i1}, v_{i2}, \dots , v_{ik})$.
Let $w$ denote the available space at any stage of the algorithm, initialized to $w=W$. Sort all agent-position pairs $(i,j)$ in decreasing order of density $d_{ij}= v_{ij} / s_i$. Process pairs in this order as follows:
% Following this ordering, proceed as follows: Consider the current $d_{ij}= v_{ij} / s_i$,
\begin{enumerate}
    \item If position $j$ is currently empty, then if $w-s_i<0$, stop.
    %proceed to the next element in the ordering.
    Otherwise, assign agent $i$ to position $j$ and set $w \leftarrow w-s_i$. Remove all remaining pairs $(i,j')$ associated with agent $i$ from the density ordering and continue with the next highest ranked element in the list.
    %Remove any density $\frac{v_{ij'}}{s_i}$ that is associated with agent $i$ from the density ordering. Continue to the next highest ranked element in the updated ordering.
    \item If position $j$ is currently occupied by some agent $i'$, then if  $w+ s_{i'}-s_{i}<0$, stop. 
    Otherwise:
    \begin{itemize}
        \item  If $v_{ij}>v_{i'j}$ and $s_i\geq s_{i'}$, replace agent $i'$ with $i$, and update $w \leftarrow w+s_{i'}-s_i$. Remove all remaining pairs $(i,j')$ from the density ordering, and return all pairs $(i',j'')$ associated with agent $i'$ except those for positions that $i'$ has previously been assigned to. Continue from the highest ranked element in the updated ordering.
        \item Else, move to the next highest ranked element in the ordering.
    \end{itemize}
\end{enumerate}
\textbf{Output}: The produced allocation $\allocation$.
\end{alg}
%\el{Consider putting the stopping condition as a while loop at the start instead of within each step.}

The ties between the agents are broken lexicographically. If an agent has the same density for two or more positions, ties are again broken lexicographically.

We note that once agent $i$ is replaced from position $j$, the pair $(i,j)$ is never revisited and the agent is never examined for position $j$ again. Therefore, Mechanism \ref{alg:augm_greedy} terminates after at most $n\cdot k$ steps. 

However, Mechanism \ref{alg:augm_greedy} alone fails to achieve constant-factor approximation. 

\newpage

\begin{proposition}\label{prp:bad_instance_main_result_1}
Mechanism \ref{alg:augm_greedy} provides an arbitrarily bad approximation of the maximum social welfare.
\end{proposition}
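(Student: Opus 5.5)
We prove the statement by exhibiting a family of instances, parameterized by a large number $M$, on which Mechanism \ref{alg:augm_greedy} returns welfare $O(1)$ while the optimum is $\Omega(M)$. The classical knapsack bad example for density greedy is the natural starting point. A tiny, very dense item is processed first and consumes just enough space that a large, high-value item no longer fits, after which the algorithm stops. The stopping rule in step 1 ("if $w-s_i<0$, stop") helps us here: once the big item fails to fit, the mechanism halts immediately and never considers anything else.

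Concretely, take a single position ($k=1$) and two agents with capacity $W=1$. Agent $1$ has $s_1=\epsilon$ and $v_{11}=2\epsilon$, so $d_{11}=2$. Agent $2$ has $s_2=1$ and $v_{21}=M$ with $1<M<2$, so $d_{21}=M<2$.

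1. The ordering processes $(1,1)$ first. Position $1$ is empty and $w-s_1=1-\epsilon\ge 0$, so agent $1$ is assigned and $w=1-\epsilon$.
2. Next $(2,1)$ is processed, and position $1$ is occupied by agent $1$. We check $w+s_1-s_2 = 1-\epsilon+\epsilon-1=0\ge 0$, so the algorithm does not stop. Moreover $v_{21}>v_{11}$ and $s_2\ge s_1$, so the replacement happens and agent $2$ wins.

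So with a single position the exchange step repairs exactly this failure, and this naive instance does not work. The instance must use two positions, so that the dense small agent sits in a different position from the one the big agent wants. Then the big agent meets an empty slot but lacks space, and the algorithm stops.

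Take $k=2$ and $W=1$. Agent $1$ has $s_1=\epsilon$, $v_{12}=2\epsilon$ (density $2$) and $v_{11}=\epsilon$ (density $1$). Agent $2$ has $s_2=1$, $v_{21}=v_{22}=M$ (density $M$). Since $M$ is large, agent $2$ would be processed first, so we must make agent $1$'s density exceed $M$. Set $v_{12}=2M\epsilon$ and $v_{11}=\epsilon$, and break the tie between $(2,1)$ and $(2,2)$ so that $(2,1)$ comes first; alternatively set $v_{22}=M-1$.

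1. The order begins with $(1,2)$ at density $2M$. Agent $1$ is assigned to position $2$ and $w=1-\epsilon$, and her other pairs are removed.
2. Next comes $(2,1)$: position $1$ is empty and $w-s_2=-\epsilon<0$, so the algorithm stops.

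The output welfare is $2M\epsilon$. Choosing $\epsilon=1/M^2$ makes it $2/M$, while assigning agent $2$ alone gives $\OPT\ge M$. The ratio $M^2/2$ is unbounded.

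The main obstacle is exactly this interplay with the replacement step. The instance has to route the blocking agent to a different position than the one the high-value agent tries first, so that step 1, not step 2, handles the high-value pair. I will also double-check the tie-breaking and the density ordering so that $(2,1)$ is reached before the pair $(2,2)$, which could otherwise trigger a replacement of agent $1$. With $v_{22}=M-1$ the pair $(2,1)$ strictly precedes $(2,2)$ in the order, and the algorithm stops at $(2,1)$ before $(2,2)$ is ever processed, so this ambiguity is removed.
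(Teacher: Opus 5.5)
Your final instance is correct and is essentially the paper's construction. In both, there are two agents and two positions, and the dense agent is routed to a different position from the one the high-value agent reaches first. The high-value agent therefore meets an \emph{empty} position without enough space, so the stopping rule of step 1 fires instead of the replacement step, and the mechanism halts.

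The differences are cosmetic. The paper uses a unit-size dense agent with $W\to\infty$ and sets the unwanted pairs to value $0$. You fix $W=1$, shrink the dense agent to size $\epsilon$, and keep all values strictly positive, as the preliminaries require. You also pin down the $(2,1)$ versus $(2,2)$ order via $v_{22}=M-1$, which guards against $(2,2)$ being processed first and triggering a replacement.
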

\begin{proof}
    Consider an instance with 2 agents and 2 positions. We have that $v_{11}=1+\epsilon$, $v_{12}=0$ and $s_1=1$, while $v_{21}=0$, $v_{22}=W$ and $s_2=W$. Mechanism \ref{alg:augm_greedy} yields social welfare equal to $1+\epsilon$, while the optimal solution achieves social welfare $W$, where $W$ can be arbitrarily large.
\end{proof}

This observation motivates the following combination\footnote{Given that both options are assigned equal probabilities $\frac{1}{2}$, this can equivalently be transformed to a deterministic mechanism with the same approximation guarantees, that always selects the highest-welfare between the two options.}:

\begin{alg}
\label{alg:final_mech}
Consider a value profile $(\vals_1, \vals_2, \dots , \vals_n)$, where $\vals_i = (v_{i1},v_{i2}, \dots , v_{ik})$. Let $\allocation_1$ be the output of Mechanism \ref{alg:augm_greedy} and $\allocation_2$ the output of $G(v_{max})$.
\textbf{Output}: With probability $\frac{1}{2}$ output $\allocation_1$ and probability $\frac{1}{2}$ output $\allocation_2$.
\end{alg}

% Notice that the reason we randomize between the two parts is because the example of \Cref{prp:non-mon-knap} holds for the deterministic version of this mechanism as well (if we try to combine the two parts by using the $\max$ operator).

%%%%%%%%%%%%%%%%%%%%%%%%%%%%%%%%%%%%%%%%%%%%%%%%%
\subsection{Approximation Guarantees}\label{sec:aprx_guar_main_resul}
%%%%%%%%%%%%%%%%%%%%%%%%%%%%%%%%%%%%%%%%%%%%%%%%%%

In this section, we analyze the approximation guarantees of Mechanism \ref{alg:final_mech}. We first introduce some necessary notation that will be used throughout the proof. Our algorithm constructs, step by step, a matching (each agent is being assigned to at most one position) subject to the capacity constraint. We start by denoting that $(i,j) \in M$, if agent $i$ is assigned to position $j$ in a matching $M$. Let $W^G\leq W$ denote the total capacity used by Mechanism \ref{alg:augm_greedy} and let $M^G$ be the matching it produces. Let $M^*_G$ denote the optimal matching with available capacity $W^G$ and $M^*$ denote the optimal matching with available capacity $W$, covering capacity $W^* \leq W$. For any matching $M$ we write $v(M)=\sum_{(i,j) \in M} v_{ij}$ to define the social welfare achieved by matching $M$.  

We begin with the following useful lemma.

\begin{lemma} [Rearrangement Inequality (\cite{Hardy52inequalities})]
\label{lem:rearlemma} Let $A$, $B$ be two disjoint sets of elements, each element represented by positive value-size tuples $(v_i, s_i)$ and  density $d_i = \frac{v_i}{s_i}>0$. If $d_j > d_k$ for all $j \in A$, $k \in B$ (density dominance) and $\sum\limits_{j \in A}s_j \ge \sum\limits_{k \in B} s_k$ (size dominance), then $\sum\limits_{j \in A}v_j \ge \sum\limits_{k \in B} v_k$.
\end{lemma}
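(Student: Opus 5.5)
The proof is a direct comparison through a threshold density. Let $d^\ast = \min_{j\in A} d_j$. Since $A$ is finite and nonempty, this minimum is attained; if $A$ is empty, size dominance forces $B$ to be empty as well, and the claim is trivial. Density dominance then gives $d_k < d^\ast$ for every $k\in B$. The plan is to show that the total value of $A$ is at least $d^\ast$ times its total size, that the total value of $B$ is at most $d^\ast$ times its total size, and then to link the two bounds using size dominance.

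First, for each $j\in A$ we write $v_j = d_j s_j \ge d^\ast s_j$, using that sizes are positive. Summing over $A$ gives
\[
\sum_{j\in A} v_j \;\ge\; d^\ast \sum_{j\in A} s_j .
\]
Second, for each $k\in B$ we have $v_k = d_k s_k \le d^\ast s_k$. Summing over $B$ gives
\[
\sum_{k\in B} v_k \;\le\; d^\ast \sum_{k\in B} s_k .
\]
The inequality is strict when $B\neq\emptyset$, but strictness is not needed. Third, since $d^\ast>0$, size dominance yields
\[
d^\ast\sum_{j\in A}s_j \;\ge\; d^\ast\sum_{k\in B}s_k .
\]
Chaining the three displays proves the lemma.

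There is no genuine obstacle here. The only points needing care are the degenerate case $A=\emptyset$ and the positivity of $d^\ast$ and of the sizes, which is what makes multiplying the size inequality by $d^\ast$ legitimate. The name ``rearrangement'' simply reflects the fact that the argument is an exchange bound through the threshold density $d^\ast$, rather than an appeal to the classical Hardy--Littlewood--P\'olya inequality.
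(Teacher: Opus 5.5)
Your proof is correct and is essentially the paper's argument: both take the minimum density over $A$ as a threshold, bound $\sum_{j\in A} v_j$ from below and $\sum_{k\in B} v_k$ from above by that threshold times the respective total sizes, and then chain the two bounds using size dominance. Your treatment of the degenerate cases is slightly more careful than the paper's. You handle $A=\emptyset$ explicitly, and you avoid relying on strictness, whereas the paper's strict inequality $\sum_{k\in B} v_k < m\sum_{k\in B} s_k$ fails when $B=\emptyset$.
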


\begin{proof}
Let $m$ be the smallest density in $A$, that is, $m = \min\limits_{j \in A}d_j$. Hence, by density dominance, for every $k \in B$, it holds $d_k < m$. We now upper-bound the total value of $B$.
\begin{equation}
\label{eq:chebyshev-upper-bound}
    \sum\limits_{k \in B}v_k = \sum\limits_{k \in B}s_k \cdot d_k \le \left(\max\limits_{k \in B} d_k  \right) \cdot \sum\limits_{k \in B}s_k < m \cdot \sum\limits_{k \in B}s_k.
\end{equation}
Similarly, we now lower-bound the total value of $A$. Since for every $j \in A$, it holds $d_j \ge m$, we get
\begin{equation}
\label{eq:chebyshev-lower-bound}
    \sum\limits_{j \in A}v_j = \sum\limits_{j \in A}s_j \cdot d_j \ge m \cdot \sum\limits_{j \in A} s_j.
\end{equation}
By combining inequalities \eqref{eq:chebyshev-upper-bound}, \eqref{eq:chebyshev-lower-bound} we get:
\begin{equation*}
    \sum\limits_{k \in B}v_k < m \cdot \sum\limits_{k \in B}s_k \le m \cdot  \sum\limits_{j \in A} s_j \le \sum\limits_{j \in A}v_j
\end{equation*}
where the second inequality holds due to the size dominance of set $A$ over $B$. 
\end{proof}

%For completeness, we provide the proof in the \Cref{app:pr-rearlemma}

We can now show the following:

\begin{lemma}\label{lem:main_apprx}
   Consider matchings $M^*_G$ and $M^G$. It holds that $v(M^*_G) \leq 2\cdot v(M^G)+v(G(v_{max}))$.  
\end{lemma}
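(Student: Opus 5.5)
We compare $M^*_G$ with $M^G$ pair by pair and split the optimal pairs into three groups. \textbf{Group (a)} consists of pairs of $M^*_G$ that also lie in $M^G$; these are trivially covered by $v(M^G)$.

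\textbf{Group (b)} consists of pairs $(i,j)\in M^*_G\setminus M^G$ that the algorithm considered before it stopped and did not keep. There are two reasons such a pair can be lost. Either the pair was rejected because position $j$ was occupied by some $i'$ with $v_{i'j}\ge v_{ij}$ or $s_i<s_{i'}$. Or the pair was later displaced by an agent $i''$ with $v_{i''j}>v_{ij}$.

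The key invariant we establish first concerns the final occupant of each position. Occupants of a position only change through replacements that strictly increase the value at that position. Hence the final occupant $\sigma(j)$ of $j$ has $v_{\sigma(j)j}$ at least the value of every agent ever assigned there. Also, a position, once occupied, is never vacated; a replaced agent is always substituted by another.

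We then charge each optimal pair $(i,j)$ of group (b) to $\sigma(j)$ whenever a value comparison at $j$ applies. Since $M^*_G$ is a matching, each position is charged at most once through this route, so the total charge is at most $v(M^G)$.

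The delicate subcase is a rejection with $v_{ij}>v_{i'j}$ but $s_i<s_{i'}$. Here the optimal agent is smaller and denser than the occupant, yet was refused. We will need to handle it via the density order instead: $(i',j)$ was processed earlier, so $d_{i'j}\ge d_{ij}$. A further complication is that pairs of $i$ may have been removed because $i$ was assigned elsewhere. In that case $i$ itself appears in $M^G$ at some position $j'$ with density at least $d_{ij}$, though not necessarily value at least $v_{ij}$. We plan to charge such pairs by density to the second copy of $v(M^G)$.

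\textbf{Group (c)} consists of pairs never examined because the algorithm stopped. For these we use \Cref{lem:rearlemma}. Let $(i_0,j_0)$ be the pair at which the algorithm stopped, and let $A=M^G\cup\{(i_0,j_0)\}$. Every unexamined pair has density at most $d_{i_0j_0}$, while every pair ever placed had density at least that. At the stopping moment the total size of $A$ (adjusting for the possible removal of $i'$) exceeds $W^G$. On the other hand, the unexamined optimal pairs have total size at most $W^G$, since they are part of $M^*_G$. Size dominance and density dominance then give that their value is at most $v(A)\le v(M^G)+v(G(v_{max}))$.

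The density-charged pairs from group (b) are bounded the same way, by a second application of \Cref{lem:rearlemma} against $M^G$. Combining the three groups gives at most $2v(M^G)+v(G(v_{max}))$. The main obstacle is carrying out this accounting so that $M^G$ is used at most twice in total. We would try to merge the density-based charges of group (b) with those of group (c) into a single application of \Cref{lem:rearlemma}, using the set $A$ above as the dominating side. That would leave one copy of $v(M^G)$ for the value-based position charges and give exactly $2v(M^G)+v(G(v_{max}))$.

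A second obstacle is that replacements break the monotonicity of density along the run. We would handle this by noting that a replacement requires $s_i\ge s_{i'}$ together with higher value, so the density of the current allocation only grows in a controlled way. This should keep density dominance valid for the final matching $M^G$.
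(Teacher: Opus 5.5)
\textbf{There is a genuine gap.} Your proposal leaves open exactly the step that carries the lemma, namely charging $M^G$ at most twice, and neither device you suggest for it works.

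First, applying Lemma~\ref{lem:rearlemma} to group (c) against $M^G\cup\{(i_0,j_0)\}$ needs every unexamined optimal pair to have density at most that of every pair of $M^G$, and this can fail. A replacement only requires $v_{ij}>v_{i'j}$ and $s_i\ge s_{i'}$, so it can strictly lower the density at a position; for example, $(v_{i'j},s_{i'})=(1,1)$ replaced by $(v_{ij},s_i)=(2,10)$. The displaced agent's remaining pairs are then re-inserted with densities above that of the new occupant. Hence neither ``every pair ever placed had density at least $d_{i_0j_0}$'' nor ``the density of the current allocation only grows'' holds.

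Second, in your delicate subcase, $v_{ij}>v_{i'j}$ and $s_i<s_{i'}$ give $d_{ij}>d_{i'j}$. Your claim $d_{i'j}\ge d_{ij}$ therefore contradicts the case hypothesis; the case only arises when $(i,j)$ was re-inserted after $(i',j)$ was processed. When $i'$ stays at $j$ until the end, $(i',j)\in M^G$ has strictly smaller density than $(i,j)$. So $M^G\cup\{(i_0,j_0)\}$ cannot density-dominate such a pair, and the proposed merge with group (c) fails.

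Third, even granting dominance, that application spends a full copy of $v(M^G)$ on all of $M^G$. Agents lying in both matchings but at different positions usually have optimal density above some pair of $M^G$, so the application cannot absorb them. They would need a third copy to pay for their own pairs. (Also, for a fixed agent, $d_{ij'}\ge d_{ij}$ is the same as $v_{ij'}\ge v_{ij}$. So those pairs are simply charged by value to $(i,j')$, not ``by density''.)

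\textbf{The missing idea} is the paper's agent-based partition and its restricted use of Lemma~\ref{lem:rearlemma}. The paper sets $D=N^*_G\cap S$. It lets $A$ be the optimal agents outside $S$ whose optimal-pair density is below \emph{every} pair of $M^G$, so density dominance holds by definition and no monotonicity of the run is needed. The remaining agents outside $S$ go into $B$ and $C$.

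Lemma~\ref{lem:rearlemma} is then applied only between $A$ and $S\setminus D$. Since $M^G$ occupies exactly $W^G$ and $M^*_G$ at most $W^G$, cancelling the common agents $D$ gives size dominance directly, with no need for the stopping pair.

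The charging is then as follows:
\begin{itemize}
\item Agents of $B\cup C$, and agents of $D$ whose optimal value exceeds their value in $M^G$, are charged by value to the final occupant of their optimal position, with one charge per position.
\item The other agents of $D$ pay for themselves.
\item $v(G(v_{max}))$ absorbs only the stopping pair.
\end{itemize}
Each pair of $M^G$ is thus used at most twice: once as an occupant, and once either inside $S\setminus D$ or for itself.

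The size-based rejection you flag is a real subtlety. The paper's treatment of $C$ and $D$ passes over it, simply asserting that the occupant of $j$ has value at least $v_{ij}$ there. Your proposal does not resolve it either.
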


\begin{proof}
    Let $N^*_G$ be the agents that are included in the optimal solution given $W_G$ available capacity, and $S$ the set of agents that are included in the solution of Mechanism \ref{alg:augm_greedy}, given $W$ available capacity. We split set $N^*_G$ into subsets, and we bound their contribution separately. Throughout the proof, we use the term \emph{considered} in the following manner: We say that an agent has been considered by Mechanism \ref{alg:augm_greedy} if at some point it was assigned to a position. Notice that this agent, in the end, might be or might not be in the solution that Mechanism \ref{alg:augm_greedy} produces, but being assigned somewhere at least once (even if in the process she was replaced) is enough to be included in this set. 
    
    So, let $A \subseteq N^*_G $ be the set that contains every agent that is not in $S$, i.e., they are not included in the solution of Mechanism \ref{alg:augm_greedy}, and for which we have that the density for the position that has been assigned to them in $M^*_G$ is smaller than the density of any pair $(i,j) \in M^G$. Notice that this set might contain agents that are both considered or not considered by Mechanism \ref{alg:augm_greedy}. %that have not been considered by Mechanism \ref{alg:augm_greedy} and their densities for the position that has been assigned to them in $M^*_G$ is smaller than the density of any pair $(i,j) \in M^G$. 
    Let $B \subseteq N^*_G $ be the set of agents that have not been considered by Mechanism \ref{alg:augm_greedy} (and thus, they are not in $S$) and their densities for the position that has been assigned to them in $M^*_G$ is higher than the density of at least one pair $(i,j) \in M^G$. 
    Moreover, let $C \subseteq N^*_G $ be the set of agents that have been considered by Mechanism \ref{alg:augm_greedy}, their densities for the position that has been assigned to them in $M^*_G$ is higher than the density of at least one pair $(i,j) \in M^G$, but are not in set $S$. Finally, let  $D \subseteq N^*_G $ be the set of agents that have been considered by Mechanism \ref{alg:augm_greedy} but are in set $S$, i.e., they are included in the solution of Mechanism \ref{alg:augm_greedy}. Notice that the above collection of sets defines a partition of set $N^*_G $.

    \textbf{Bounding the value of set $A$:} First of all, notice that the space that the agents in $N^*_G \setminus D$ cover, i.e., their total size, is smaller or equal to the space that the agents in $S\setminus D$ cover. The reason for this is that the solution of Mechanism \ref{alg:augm_greedy} covers exactly $W^G$ space while the optimal solution in this space might cover at most this (and notice that we remove the set $D$ of common agents between the two). Moreover, the agents in set $A$ have their densities for the positions that have been assigned to them in $M^*_G$ to be smaller than the density of any pair $(i,j) \in M^G$, and thus, of any pair in the subset of agents that belongs to $S\setminus D$. Therefore, by using Lemma \ref{lem:rearlemma}, the total contribution in terms of value that the agents in set $A$ have in the optimal solution for space $W^G$ is smaller or equal to the total contribution that the agents in $S\setminus D$ have in the solution of Mechanism \ref{alg:augm_greedy}. So far we have used the agents in $S\setminus D$ once.

    \textbf{Bounding the value of set $B$:} Consider an agent $i \in B$ and let $j$ be the position that is assigned to $M^*_G$. Initially, we show that position $j$ is occupied by some agent $i'$ in $M^G$ as well. Suppose for contradiction that this is not the case. Now notice that the first scenario is that  Mechanism 1 reached at some point $d_{ij}$ and did not add $(i,j)$ in the solution. Observe that at that point agent $i$ is available as she has not been considered at all, by definition of set $B$, it must be the case that she did not fit in the solution because of the capacity constraint. Since Mechanism \ref{alg:augm_greedy} stops when it reaches a pair that cannot be included due to exceeding total capacity and no more agents are included after, we get that $d_{ij}$ is smaller than any density of any pair in $M^G$, a contradiction. The second scenario is that Mechanism 1 never reached $d_{ij}$, and stopped at some density before that. Therefore, once more, all the included pairs have higher densities than $d_{ij}$, a contradiction. So, we have that $(i',j)$ is in $M^G$. It is not hard to see that since $d_{ij}$ is higher than at least the density of one pair in $M^G$, the only way that she was not considered by Mechanism \ref{alg:augm_greedy} is that position $j$ was already occupied, Mechanism \ref{alg:augm_greedy} tried to assign agent $i$ to position $j$, agent $i$ could fit in the solution (as otherwise the mechanism would stop there and all the included densities would be higher than hers, a contradiction), but the value of the currently assigned agent to position $j$ was higher than $v_{ij}$. The latter implies that $v_{ij}\leq v_{i'j}$. Notice that this is the first time that position $j$ is examined. For the assigned agent we have that if $i' \in S\setminus D $ we have used the contribution of pair $(i',j)$ twice so far, else if $i' \in D $ we have used the contribution of pair $(i',j)$ just once so far.

    \textbf{Bounding the value of set $C$:}  Consider an agent $i \in C$ and let $j$ be the position that is assigned to in $M^*_G$.  There are two possible scenarios: 1) Pair $(i,j)$ is not the pair that exceeds the capacity when its density is met by Mechanism \ref{alg:augm_greedy} (and thus, the point where the mechanism stops). First of all, notice that since agent $i$ has been considered by Mechanism \ref{alg:augm_greedy}, and  $d_{ij}$ is higher than at least the density of one of the elements in $M^G$, Mechanism \ref{alg:augm_greedy} will reach $d_{ij}$ before it terminates. For the mechanism to never reach $d_{ij}$, it must be that an earlier position $j'$ has been assigned to $i$ and not replaced, which would place $i \in S$, leading to a contradiction with the original assumption for the set $C$. We will prove that position $j$ is occupied by some agent $i'$ in $M^G$. Suppose, for contradiction, that this is not the case. Now notice that, based on the previous argument, Mechanism 1 reached $d_{ij}$ but did not add $(i,j)$ to the solution. This could only occur if she did not fit in the solution because of the capacity constraint, which contradicts the case we examine. So, we get that $(i',j)$ is in $M^G$. As $(i,j)$ is not in $M^G$ and $d_{ij}$ was reached, this implies that either agent $i$ was considered for position $j$ and then replaced by someone of higher value, or the occupier at that point of position $j$ when $d_{ij}$ was reached had higher value. In both cases, we derive that $v_{ij}\leq v_{i'j}$. As this is the first time position $j$ is examined, agent $i'$ has been used twice so far if she also belongs to set $S\setminus D$, or once so far if she is in $D$. 2) Pair $(i,j)$ is the pair that exceeds the capacity when its density is met by Mechanism \ref{alg:augm_greedy}, and marks the point where the mechanism stops. Unlike the analysis of the previous case, in this scenario position $j$ might not be occupied in $M^G$, so it is not possible to use a similar argument. However, notice that there can be only one pair in $C$ that satisfies this condition throughout the run of Mechanism \ref{alg:augm_greedy}, as the mechanism terminates once. In this scenario, we  bound $v_{ij}$ by using $v(G(v_{max}))$.
    
    \textbf{Bounding the value of set $D$:} Consider an agent $i \in D$ and let $j$ be the position that is assigned to in $M^*_G$. We have two subcases: 1) Agent $i$ is assigned to a position $j'$ in $M^G$ for which $v_{ij}>v_{ij'}$. It is easy to see that position $j$ is occupied by some agent $i'$ in $M^G$. The reason is that $v_{ij}>v_{ij'}$ implies $d_{ij}>d_{ij'}$. Therefore, Mechanism \ref{alg:augm_greedy} would have considered position $j$ before position $j'$ for agent $i$. So now, assume that an agent $i'$ occupies position $j$ in $M^G$. Notice that it must hold that $v_{i'j}>v_{ij}$ as the only possible cases are that agent $i$ was considered for position $j$ and then replaced, or she was not considered at all, both implying a lower value. As this is the first time position $j$ is examined, agent $i'$ has been used twice so far if she also belongs in $S\setminus D$, or once so far if she is in $D$. 2) Agent $i$ is assigned to a position $j'$ under $M^G$ for which $v_{ij}\leq v_{ij'}$. If $v_{ij}< v_{ij'}$, then it is easy to see that position $j'$ is occupied by some agent $i'$ in $M^*_G$. The reason for this is that if this was not the case, putting agent $i$ there would provide a value that is higher than $v(M^*_G)$, a contradiction. Therefore, agent $i$ will be used to bound the value of herself in $M^*_G$ and has been used once more depending on if agent $i'$ belongs in set $D$ (see subcase 1) or in sets $B, C$, thus at most twice in total. Else, if $v_{ij}= v_{ij'}$, then position $j'$ might or might not be occupied by some agent $i'$ in $M^*_G$. If it is, the same arguments as before apply, if it is not (or if $j=j'$), then agent $i$ will be used to just bound the value of herself (so at most once in this case). 
    \end{proof}

Now let $SW^R$ be the expected social welfare of Mechanism \ref{alg:final_mech}. Given the previous lemma, we can now prove the main theorem of this section.

\begin{theorem}\label{thm:main_approx}
   Mechanism \ref{alg:final_mech} achieves (in expectation) a 6-approximation of the optimal social welfare. Moreover, its approximation cannot always be better than 3.
\end{theorem}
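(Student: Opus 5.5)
The plan is to reduce the full-capacity optimum $M^*$ to the quantities controlled by Lemma~\ref{lem:main_apprx}. Since $SW^R=\tfrac12\bigl(v(M^G)+v(G(v_{max}))\bigr)$, a 6-approximation follows once we show
\[
v(M^*)\le 3\,v(M^G)+3\,v(G(v_{max})).
\]
By Lemma~\ref{lem:main_apprx} we already have $v(M^*_G)\le 2v(M^G)+v(G(v_{max}))$. So it suffices to prove the \emph{capacity-extension} bound
\[
v(M^*)\le v(M^*_G)+v(M^G)+2\,v(G(v_{max})),
\]
or any bound of similar shape that still sums to at most $3v(M^G)+3v(G(v_{max}))$.

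\textbf{Trivial cases.} If Mechanism~\ref{alg:augm_greedy} never stops on the capacity test, then every pair was processed. I expect $W^*\le W^G$ essentially, or I would rerun the argument of Lemma~\ref{lem:main_apprx} directly with $W$ in place of $W^G$, since that proof only used that $M^G$ fills its own capacity. Otherwise the mechanism stops at some pair $(i_0,j_0)$, possibly while trying to replace some $i'$. In both stopping rules we get $W-W^G<s_{i_0}$, because the leftover space $w=W-W^G$ (plus $s_{i'}$ in the replacement case) is smaller than $s_{i_0}$.

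\textbf{Main case.} Order the pairs of $M^*$ by decreasing density and split them into three groups:
\begin{itemize}
\item $P$, the longest prefix with total size at most $W^G$;
\item the single crossing pair $c$;
\item the remainder $R$.
\end{itemize}
The prefix $P$ is a feasible matching for capacity $W^G$, so $v(P)\le v(M^*_G)$. The crossing pair is itself a feasible single pair, so $v(c)\le v(G(v_{max}))$.

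The remainder $R$ has total size at most $W^*-W^G\le W-W^G<s_{i_0}$, and every pair in $R$ has density at most that of every pair in $P\cup\{c\}$. I would split on sizes.
\begin{itemize}
\item If $s(R)\le s(P)$, Lemma~\ref{lem:rearlemma} gives $v(R)\le v(P)$. This costs an extra $v(M^*_G)$, which is too much, so I would instead compare $R$ against $M^G$. Pairs of $R$ that lie below all densities of $M^G$ are handled by Lemma~\ref{lem:rearlemma} against $M^G$, which is valid because $s(R)$ is at most the size of $M^G\cup\{(i_0,j_0)\}$. Pairs of $R$ above some density of $M^G$ are charged, exactly as for sets $B,C$ in Lemma~\ref{lem:main_apprx}, to occupants of their positions.
\item If $s(R)$ is large compared to $W^G$, then $s_{i_0}>s(R)$ is large. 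In that case I would bound $R$ by a constant number of single feasible pairs via $G(v_{max})$, or by $(i_0,j_0)$ itself, whose value is at most $v(G(v_{max}))$.
\end{itemize}

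\textbf{Main obstacle.} This bookkeeping is the hard step: bounding $v(R)$ by $v(M^G)+v(G(v_{max}))$ without double-charging pairs already used twice inside Lemma~\ref{lem:main_apprx}. My first attempt would be the density argument above, charging $R$ to $M^G\cup\{(i_0,j_0)\}$ via Lemma~\ref{lem:rearlemma}, with the one crossing pair absorbed by $G(v_{max})$.

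\textbf{Lower bound of 3.} I would build a small instance with two or three agents in which the mechanism's pairs have values near $1$ and the single best pair has value near $1$, while the optimum packs three pairs of value near $1$. For example, the greedy output occupies a position with a slightly denser but larger ad, which blocks two smaller ads of equal value on other positions. Then $SW^R\approx 1$ while $\OPT\approx 3$, and letting $\epsilon\to 0$ gives the ratio $3$.
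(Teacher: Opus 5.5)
Your reduction to $v(M^*)\le v(M^*_G)+v(M^G)+2\,v(G(v_{max}))$, followed by Lemma~\ref{lem:main_apprx}, is exactly the paper's. The facts $v(P)\le v(M^*_G)$, $v(c)\le v(G(v_{max}))$ and $s(R)<W-W^G<s_{i_0}$ are all correct. But the step you flag as the main obstacle is the whole content of the capacity extension, and your sketch does not close it. Applying Lemma~\ref{lem:rearlemma} against $M^G$ to the low part of $R$, and then charging the high part to occupants, spends $v(M^G)$ twice, which gives at best an $8$-approximation. Splitting on the size of $R$ is also the wrong axis. Two ideas are missing.

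First, split $R$ at the stopping density $d_{i_0j_0}$. Pairs of $R$ with density at most $d_{i_0j_0}$ are worth at most $d_{i_0j_0}\,s(R)<d_{i_0j_0}\,s_{i_0}=v_{i_0j_0}\le v(G(v_{max}))$. So your size bound finishes them once you compare with the single stopping pair rather than with $M^G$. Every pair $(i,j)$ of $R$ with larger density was reached before the stop without failing the capacity test. So at the end, position $j$ is held in $M^G$ by an agent whose value there is at least $v_{ij}$. Since these positions are distinct, this part costs a single copy of $v(M^G)$.

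Second, that occupant charging is valid only for agents outside $M^G$; this is why sets $B,C$ in Lemma~\ref{lem:main_apprx} exclude $S$. With a pure density prefix, an agent belonging to both $M^*$ and $M^G$ can land in $R$. Charging her to her own $M^G$ pair can then collide with another pair of $R$ whose $M^*$-position is that same slot. The paper's $M^t$ avoids this by first inserting the $M^*$-pairs of all common agents (they fit in $W^G$ because they are all in $M^G$), and only then filling maximally. You can do the same and then continue in density order. The bound $s(R)<s_{i_0}$ survives since $s(P)+s(c)>W^G$. Your crossing-pair split then yields the same $2\,v(G(v_{max}))$ budget that the paper gets from its ``at most $2s_{i^e}$ unless $|M^*\setminus M^t|=1$'' argument. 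The same charging also covers a run with no capacity stop. There $W^*>W^G$ is still possible, and the set-$A$ step of Lemma~\ref{lem:main_apprx} does not transfer to capacity $W$, so ``rerunning Lemma~\ref{lem:main_apprx} with $W$'' is not enough.

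The lower-bound sketch also fails as described. In Mechanism~\ref{alg:augm_greedy} an ad can shut out ads at other positions only by triggering the capacity stop, so the larger ad plus one smaller ad must exceed $W$. Then no feasible solution contains the larger ad together with both small ones. With three agents, every feasible solution therefore has at most two pairs of value about $1$, and the ratio is about $2$. Reaching $3$ needs an optimum worth about three best single pairs while the mechanism keeps only one pair's worth. The paper gets this from the replacement step:
\begin{itemize}
\item A tiny agent with $v_{11}=3$ is placed first.
\item It is displaced by a heavy agent with $v_{21}=3+2\epsilon$ and $s_2=3+\delta_2$.
\item Agent $3$ ($v_{32}=3+\epsilon$, $s_3=3$) then no longer fits, and the mechanism stops.
\end{itemize}
Both $M^G$ and $G(v_{max})$ are then worth $3+2\epsilon$. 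The optimum keeps the tiny agent and fills the remaining capacity with six unit-value agents, for welfare $9$ with $W=6$.
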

\begin{proof}
    Assume that $W^* > W^G$. Consider space $W^G$ and the agents that are present in $M^*$. We build a new matching $M^t$ as follows: Suppose that $M^t$ is initially empty. First include in $M^t$ every pair $(i,j) \in M^*$ that involves agents that are part of $M^*\cap M^G$. It is easy to see that so far, capacity $W^G$ has not been exceeded. Now start putting inside the remaining pairs $(i,j)$ from $M^*$, in an arbitrary way, %in terms of their density
    until capacity $W^G$ is reached but not exceeded (i.e., if any of the remaining agents is included, then the sum of the sizes will be more than $W^G$).

    Now notice that Mechanism \ref{alg:augm_greedy} stops adding elements in the solution when it reaches a pair, say $(i^e, j^e)$, that has density $d_{i^e j^e}$ which is the highest among the remaining pairs and exceeds the current available capacity. Going back to the current matching $M^t$, consider set $M^*\setminus M^t$ and notice that it might contain pairs of  higher density than $d_{i^e j^e}$, and pairs of smaller (or equal) density than $d_{i^e j^e}$,% since pairs from $M^*$ are added to $M^t$ in an arbitrary order.
    
    Regarding the former, we have that since each one of the pairs (that have higher density than $d_{i^e j^e}$), say $(i',j')$, involve agents that do not occupy a position in $M^G$, they were reached by Mechanism \ref{alg:augm_greedy} and each one of them either was (eventually) replaced by the current occupier of position $j'$ because she had higher value, or they were never considered for their specific position $j'$ because the then occupier had higher value (and this can only increase through the run of the mechanism). Notice that it cannot be the case that they did not fit in the solution at the point they were reached, as the mechanism would have stopped there and not in $d_{i^e j^e}$.  Therefore, for each such position $j'$, there is a unique agent $i$ that is part of $M^G$ (the current occupier of this position) that contributes more value than $i'$ in this position. Thus, we can bound the sum of the values of these agents by $v(M^G)$.

    Regarding the latter, notice that the sum of the sizes of the agents that are part of $M^*\setminus M^t$, cannot be more than $2 \cdot s_{i^e}$ unless the cardinality of $M^*\setminus M^t$ is 1. To see this notice that $W-W_G<s_{i^e}$ (regardless of if position $j^e$ was occupied at the time the density $d_{i^ej^e}$ was reached in the ordering), and at the same time any agent from $M^*\setminus M^t$ that is added to $M^t$ exceeds $W^G$. So a sum of sizes (of 2 or more agents) that is more than $2 \cdot s_{i^e}$ would mean: 1) All the agents in $M^*\setminus M^t$ (that belong in this case) would need to be bigger than $s_{i^e}$, as $W-W_G<s_{i^e}$ and when any of them is added to $M^t$ it needs to exceed $W^G$, 2) the remaining space that is smaller than $W-W_G<s_{i^e}$ would need to include the rest of the agents, so none of them would fit in the solution, a contradiction. We conclude that either the sum of the sizes of the agents that are part of $M^*\setminus M^t$, are less or equal to $2 \cdot s_{i^e}$, or the cardinality of $M^*\setminus M^t$ is 1. In the first case, their values are bounded by 2 times the value of $(i^e, j^e)$ because of Lemma \ref{lem:rearlemma}, while in case the cardinality of set  $M^*\setminus M^t$ is 1, the value of the agent that remains is bounded by $G(v_{max})$. Finally, in case $W^* \leq W^G$, then $M^*=M^t$.

    Putting everything together, 
    
     \begin{align*}
        v(M^*) & \leq v(M^t)+2\cdot v(G(v_{max}))+ v(M^G) \leq  v(M^*_G) + 2\cdot v(G(v_{max}))+v(M^G)\\
        & \leq 3\cdot v(M^G)+  3\cdot v(G(v_{max})) = 6\cdot (\frac{v(M^G)}{2}+\frac{v(G(v_{max}))}{2})\\
        & = 6\cdot SW^R, 
    \end{align*}
    where the first and the second inequalities hold because $M_G^*$ is the optimal matching when the given space is $W^G$, and $v(G(v_{max}))$ the highest valued pair that fits in $W$ (therefore, higher than $v_{i^e j^e}$). The third inequality holds because of Lemma \ref{lem:main_apprx}, and finally the last equation comes from the definition of the expected social welfare (recall that Mechanism \ref{alg:final_mech} randomizes between $v(M^G)$ and $v(G(v_{max}))$ with probability $\frac{1}{2}$). 
    
    Regarding the lower bound, we provide a simple instance for which our greedy algorithm obtains a 3-approximation of the optimal social welfare. Consider the following instance with 9 agents, 7 positions and space $W=6$. Agents 1 and 2 have a non-zero value only for the first position, with $v_{11} = 3, v_{21} = 3+2\epsilon$, and sizes $s_1 = \delta_1$ and $s_2 = 3+\delta_2$. Agent 3 has a non-zero value only for the second position $v_{32} = 3+\epsilon$ and size $s_3 = 3$. We set $\delta_2$ such that $\delta_2 < \frac{3 \epsilon}{3 + \epsilon}$, which implies that $d_{21}>d_{32}$. Moreover, each agent with index $i = \{4, \dots, 8\}$ has value $v_{ij} = 1$ for all positions $j$ and size $s_i = 1$. Finally, agent 9 has value $v_{9j} = 1$ for all positions $j$ and size $s_9 = 1 - \delta_1$, where $\delta_1 < \frac{\epsilon}{3+ \epsilon}$.
    
    Mechanism \ref{alg:augm_greedy} will first assign agent 1 to position 1, and subsequently replace her by allocating agent 2 to position 1. Afterwards, it will examine agent's 3 density for position 2, but $s_2 + s_3 > W$, and the mechanism halts. The expected welfare obtained by Mechanism \ref{alg:final_mech} is $3+ 2 \epsilon$ (as both parts yield the same value), while the optimal solution would allocate agent 1 to position 1 and agents 4 to 9 to positions 2 to 7 respectively, achieving welfare of 9. 
\end{proof}

%%%%%%%%%%%%%%%%%%%%%%%%%%%%%%%%%%%%%%%
\section{From Matching to Position Auctions with a Capacity Constraint}\label{sec:mech}
%%%%%%%%%%%%%%%%%%%%%%%%%%%%%%%%%%%%%%%

We now turn our attention to the position auctions setting from a mechanism design perspective. This is a special version of the general problem studied in \Cref{sec:alg}, as the \emph{realized} values of the agents are the product of their private value and the publicly known CTRs, which implies that all agents have the same ordinal preferences over the positions. We now have to account for the strategic behavior of agents who may misreport to improve their outcome, which introduces an additional level of complexity. Shifting our goal from achieving good approximation alone, we are interested in designing mechanisms that are computationally efficient, achieve constant-factor approximation to the optimal social welfare as well as being truthful, which in single-parameter settings is equivalent to designing monotone algorithms. 

As shown in the Proposition below, we can construct examples that show that Mechanism \ref{alg:augm_greedy} violates monotonicity. This, in turn, implies that Mechanism \ref{alg:final_mech} is not universally truthful. We show that a slight modification of the stopping condition can render the resulting mechanism monotone, enabling us to derive a universally truthful mechanism with constant approximation, through randomization and by applying Myerson's payment scheme. 

\begin{proposition}\label{prop:nonmon}
Mechanism \ref{alg:augm_greedy} is not monotone.
\end{proposition}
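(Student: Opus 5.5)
The plan is to exhibit an explicit small instance of the position auctions setting, with $v_{ij}=v_i\cdot a_j$, public sizes $s_i$ and a capacity $W$. In this instance a single agent, say agent $1$, is allocated a position of positive realized value under some bid $b_1$, but is allocated nothing (or a strictly worse position) under a higher bid $b_1'>b_1$, with $\bids_{-1}$ fixed. By the definition of a monotone allocation rule, this single instance is enough. The example will be driven by the replacement step of Mechanism \ref{alg:augm_greedy}. A higher bid moves agent $1$'s pairs earlier in the density ordering, so she tentatively grabs a top position. There she is later displaced by a larger, higher-value agent. By the time her returned pairs are reconsidered, the remaining positions are blocked, either by occupiers she cannot beat or by the capacity-based stopping rule.

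Concretely, I would use two or three positions with CTRs $a_1>a_2(>a_3)$, a small agent $1$ (size $1$), a large agent $C$ with $s_C>s_1$ and $v_C>v_1$ but lower density than agent $1$ under the high bid, and possibly a filler agent $E$ of size $1$. The intended run under the high bid $b_1'$ has four steps.
\begin{itemize}
\item[(i)] $(1,1)$ is processed first and agent $1$ takes position $1$.
\item[(ii)] $(E,2)$ is processed next and $E$ takes position $2$.
\item[(iii)] $(C,1)$ replaces agent $1$, which is allowed because $v_C a_1>b_1' a_1$, $s_C\ge s_1$ and the capacity still fits.
\item[(iv)] When agent $1$'s returned pair $(1,2)$ is reconsidered, either $E$ has higher value at position $2$, or the pair triggers the stopping condition $w+s_E-s_1<0$, or the next pair is infeasible and the mechanism halts. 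In each case agent $1$ ends with nothing.
\end{itemize}
Under the low bid $b_1$, the density $d_{11}$ falls below $d_{C1}$. Then $C$ takes position $1$ directly, and the resulting state leaves agent $1$ a free position that fits: position $2$ is still empty when $(1,2)$ is reached, or position $3$ is, because $E$ was diverted or the order of $(E,2)$ and $(1,2)$ flipped.

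The key steps are as follows. First, I fix sizes and $W$ so that the capacity arithmetic forces the intended stop or fit. Second, I choose the values $v_C$, $v_E$ and the two bids so that the density orderings are exactly as described in both runs, with ties avoided. Third, I trace both executions step by step, including the returned-pairs rule: agent $1$ gets back every pair except those for positions she previously held. Finally, I compare agent $1$'s realized values in the two outcomes.

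The main obstacle is making the two runs genuinely diverge. Naively, $C$ ends in position $1$ either way, and agent $1$ faces the same residual capacity. The divergence must therefore come from the intermediate state created while agent $1$ temporarily held position $1$. In that state some other pair, such as $(E,2)$ or a capacity-exceeding pair, gets processed in an order it would not otherwise have. That pair then either occupies agent $1$'s fallback position or triggers the global stop. If the filler-agent version fails, my first alternative is to exploit the stopping rule directly. While agent $1$ occupies position $1$, the available space $w$ is smaller. A pair that would fit in the low-bid run then fails the capacity test at that moment, and the whole mechanism halts before $C$ arrives. I would then arrange instead that agent $1$ ends up at a lower position than under the low bid.
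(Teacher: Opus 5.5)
Your overall strategy is right: a single explicit instance suffices, and the paper proves the proposition exactly that way. However, the proposal never produces the instance, and the three-agent template you sketch cannot be completed as described.

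For agent $1$ to hold position $1$ while $E$ takes position $2$ with $s_1=s_E=1$, you need $b_1'\ge v_E$. Then in step (iv) the capacity test is $w+s_E-s_1=w\ge 0$, so it never fires. And if $b_1'>v_E$, agent $1$ simply displaces $E$ at position $2$.

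More fundamentally, since $(E,2)$ precedes $(C,1)$ and $d_{E1}>d_{E2}$, the pair $(E,1)$ also precedes $(C,1)$. In the low-bid run (where $d_{11}<d_{C1}$), $E$ therefore takes position $1$ first; it is not true that $C$ takes it directly. $C$ then displaces $E$, because $v_C>b_1'\ge v_E$, $s_C\ge s_1=s_E$, and $W\ge s_C+s_E$ by the capacity test in step (iii) of the high-bid run. $E$ then immediately re-enters at position $2$. This is exactly the high-bid state after step (iii), with the same residual capacity $W-s_C-s_E$. So position $2$ is not free.

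From this common state, agent $1$ faces identical capacity tests and weakly harder value tests under the low bid, so she can only do weakly worse. Pulling agent $1$'s pairs earlier in the density order is therefore not a source of non-monotonicity here. Your fallback also points the wrong way: if the mechanism halts while agent $1$ still holds position $1$ in the high-bid run, she keeps the best position.

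The missing idea is that the divergence should come from the \emph{value comparison} in a replacement, which changes how much capacity gets consumed, rather than from reordering. The paper uses agents $(5,1)$, $(4+2\epsilon,1)$, $(3,1)$, $(4,2)$, $(5.5,4)$, with $W=6$ and CTRs $1,1-\epsilon,1-2\epsilon,0.5$.

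With bid $3$, agent $3$ is displaced from position $3$ by the medium agent $4$ (size $2$) and drops to position $4$, leaving $w=1$. The heavy agent $5$ then fails the test $w+s_1-s_5\ge 0$, and the mechanism stops with agent $3$ at position $4$.

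With bid $4+\epsilon$, agent $4$ can no longer displace agent $3$, so $w=3$ when $(5,1)$ is reached and the test now passes with equality. Agent $5$ displaces agent $1$, who displaces agent $2$ at position $2$, who in turn displaces agent $3$ at position $3$. Now $w=0$, so agent $3$'s pair for position $4$ triggers the stop and she ends unassigned.
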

\begin{proof}
While the reallocation step helps achieve a constant-factor approximation, the following example illustrates that the mechanism is not monotone. Consider the following instance with 5 agents, 4 positions and capacity $W=6$. Each agent $i$ is defined by the tuple $(v_i, s_i)$. The agents are $(v_1, s_1)= (5,1)$, $(v_2,s_2)=(4+2 \epsilon,1)$, $(v_3,s_3)=(3, 1)$, $(v_4, s_4)=(4, 2)$ and $(v_5, s_5)=(5.5, 4)$. The CTRs are set as $a_1=1, \, a_2=1- \epsilon, \, a_3=1-2\epsilon, \, a_4=0.5$. 
Mechanism \ref{alg:augm_greedy} will first assign agent 1 to position 1, then agent 2 to position 2, and agent 3 to position 3. Subsequently, it will replace agent 3 with agent 4 at position 3, since $v_3 \cdot a_3  < v_4 \cdot a_3$ and eventually agent 3 will be assigned to position 4. Suppose now that agent 3 increases her bid to $v_3'=4+ \epsilon$. Under this bidding profile, the mechanism will proceed as follows. First, it will allocate agent 1 to position 1, then agent 2 to position 2, then agent 3 to position 3. Agent 3 can no longer be replaced by agent 4 in position 3, since, under the augmented bid, $v_3' \cdot a_3 > v_4 \cdot a_3$. Given the remaining available capacity, the algorithm will examine agent 5. Now agent 5 can replace agent 1 in position 1, since $v_5 \cdot a_1 > v_1 \cdot a_1$, triggering a chain of replacements in the positions that follow. In the next steps, agent 1 replaces agent 2 at position 2 and then agent 2 replaces agent 3 at position 3. At this point, the whole capacity has been consumed, resulting in agent 3 being unassigned. Hence, agent 3 ends up with a lower utility despite increasing her bid, showing that monotonicity does not hold for the mechanism. 
\end{proof}

%%%%%%%%%%%%%%%%%%%%%%%%%%%%%%%%%%%%%%%
\subsection{A Universally Truthful Mechanism with Constant Approximation}\label{sec:mon}
%%%%%%%%%%%%%%%%%%%%%%%%%%%%%%%%%%%%%%%

Intuitively, the issue with Mechanism \ref{alg:augm_greedy} is that when an advertiser is being replaced, it is not guaranteed that she will be re-allocated, and in particular, this is affected by how this replacement happens. Specifically, increasing her bid might prevent the agent that originally replaced her to remove her from her current position, and instead she might be replaced by a \textit{heavier} agent (that comes later in the density ordering), which potentially consumes the whole available capacity. On the other hand, in the original profile, a lighter agent removed her from her current position, she eventually got a different one, and the heavier agent that replaced her in the profile with the increased bid was never reached due to her later placement in the density ordering. This observation leads us to the following modification of Mechanism \ref{alg:augm_greedy} which enables us to restore monotonicity.

\begin{alg}
\label{alg:augm_greedy_fix}
Consider a value profile $(\vals_1, \vals_2, \dots , \vals_n)$, where $\vals_i = (v_{i1}, v_{i2}, \dots , v_{ik})$.
Let $w$ denote the available space at any stage of the algorithm, initialized to $w=W$. Sort all agent-position pairs $(i,j)$ in decreasing order of density $d_{ij}= v_{ij} / s_i$. Process pairs in this order as follows:
% Following this ordering, proceed as follows: Consider the current $d_{ij}=\frac{v_{ij}}{s_i}$,
\begin{enumerate}
    \item If position $j$ is currently empty, then if $w-s_i<0$, stop.
    %proceed to the next element in the ordering.
    Otherwise, assign agent $i$ to position $j$ and set $w \leftarrow w-s_i$. Remove all remaining pairs $(i,j')$ associated with agent $i$ from the density ordering and continue with the next highest ranked element in the list.
    \item If position $j$ is currently occupied by some agent $i'$, then if $w-s_{i}<0$, stop. %proceed to the next element in the ordering. 
    Otherwise:
    \begin{itemize}
        \item  If $v_{ij}>v_{i'j}$, or if $v_{ij}=v_{i'j}$ and $s_i<s_{i'}$, or if $v_{ij}=v_{i'j}$ and $s_i=s_{i'}$, and $i$ precedes $i'$ in the density ordering, replace agent $i'$ with $i$, and update $w \leftarrow w+s_{i'} - s_i$. Remove all remaining pairs $(i,j')$ from the density ordering, and return all pairs $(i',j'')$ associated with agent $i'$ except those for positions that $i'$ has previously been assigned to. Continue from the highest ranked element in the updated ordering.
        % Then, remove any density $\frac{v_{ij'}}{s_i}$ that is associated with agent $i$ from the density ordering, and put back any density that is associated with agent $i'$ to the density ordering, besides the ones that are related to positions $j'$ that $i'$ was at some point assigned to so far. Continue from the highest ranked element in the updated ordering.
        \item Else, move to the next highest ranked element in the ordering.
    \end{itemize}
\end{enumerate}
\textbf{Output}: The produced allocation $\allocation$.
\end{alg}

The ties between the agents are broken lexicographically according to their indexing. %If an agent has the same density for 2 or more positions, ties are again broken lexicographically. \textcolor{red}{We might have to do a change here and have $v_{ij}\geq v_{i'j}$ as the condition, but in case of equality the replacement is done if the weight is smaller.}

We note two structural observations about Mechanism \ref{alg:augm_greedy_fix} in the position auctions setting. First, since all agents share the same preference ordering over the available positions, whenever an agent is replaced from position $j$, the next time she is examined, if this happens, will be for position $j+1$. This also implies that the occupied positions in any output of Mechanism \ref{alg:augm_greedy_fix} are always the $k' \leq k$ positions with the highest CTRs.
Second, if $j<j'$ and $i, i'$ are the agents currently occupying these positions respectively, then it is not hard to see that $v_{ij}\geq v_{i'j'}$. The reason for this is that an agent that is currently unassigned, will be examined by the mechanism for the first position that she has not been examined for (according to the common ordering). She remains unassigned only if she has lower value than the agent currently occupying that position, or if someone with higher value displaced her\footnote{The case of equalities are handled according to step 2 of Mechanism \ref{alg:augm_greedy_fix}.}. Finally, the value of any occupied position cannot decrease during the run of the mechanism. This implies that if an agent is displaced from a position $j$, and is immediately the next to re-enter the solution, the modified stopping condition guarantees that she will fit in the given capacity, and she will replace the current occupant of position $j+1$\footnote{This is true even in the case of equal values, due to how the ties are handled.} (if any). Thus, we get the following corollary.

\begin{corollary}
\label{cor: re-enter}
The replacement condition imposed by Mechanism \ref{alg:augm_greedy_fix}, ensures that whenever an agent $i$ is replaced from position $j$, while an entry of her was skipped because of her being assigned to position $j$, 
%and the mechanism has examined densities smaller than $d_{i, j+1}$ before terminating, 
she is guaranteed to be assigned, at least temporarily, to position $j+1$.    
\end{corollary}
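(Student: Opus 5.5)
The argument tracks the state of Mechanism \ref{alg:augm_greedy_fix} from the moment agent $i$ is displaced from position $j$ until the pair $(i,j+1)$ is next processed. When $i$ is displaced by some agent $i''$, the mechanism re-inserts all of $i$'s pairs except those for positions she has already occupied, and it restarts from the top of the updated ordering. The first thing to establish is which pair of $i$ comes first among the re-inserted ones. Because realized values have the form $v_i a_{j'}$ and the CTRs strictly decrease, her densities $d_{ij'}$ also strictly decrease in $j'$.

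Every position $j'<j$ was occupied at some point, and its occupant's value can never fall. So each such position is either a position $i$ has already held or one she was previously rejected from. In the first case the pair is never returned. In the second case it was already consumed from the ordering, and re-examining it would fail again, since the occupant's value has only risen. The hypothesis that an entry of hers was skipped because she held $j$ is exactly the statement that $(i,j+1)$ had been removed and is now returned. It is therefore the highest remaining pair of $i$, following the first structural observation that the next examination is for $j+1$.

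Next I would show that when $(i,j+1)$ is reached, the capacity test of step 2 (or of step 1) passes. The intended reading, matching the remark before the statement, is that $i$ is the next to re-enter. I would argue that just after the swap $w$ changed by $s_i - s_{i''}$ relative to a state in which $i$ fit, and that $i$ had already been accounted for in $w$ before the swap. It follows that $w+s_i-s_i\ge 0$ in the new state, in the sense that $w_{\text{new}} = w_{\text{old}} + s_i - s_{i''} \ge s_i$ whenever $w_{\text{old}}\ge s_{i''}$. That last inequality is exactly the modified stopping condition $w - s_{i''}\ge 0$ that the mechanism checked before performing the replacement. Hence $w_{\text{new}}\ge s_i$, so neither step 1 nor step 2 stops when $(i,j+1)$ is processed, provided no intermediate assignment has consumed space in between. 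This is the key point where the modification of the stopping rule from $w+s_{i'}-s_i$ to $w-s_i$ is used.

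Finally I would verify the value comparison at $j+1$. If $j+1$ is empty, step 1 assigns $i$. Otherwise let $i'$ be its occupant. By the second structural observation, the occupant of $j$ before $i$ was displaced, namely $i$ herself, satisfies $v_i a_j \ge v_{i'} a_{j+1}$. I need the stronger relation $v_i a_{j+1}\ge v_{i'}a_{j+1}$, i.e.\ $v_i\ge v_{i'}$. I would get this from the fact that $i'$ was processed for $j$ before settling at $j+1$ and lost to $i$ or to an earlier occupant of value at most $v_i a_j$, together with the tie-breaking rules of step 2 to handle equality.

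I expect the main obstacle to be the claim that no other pair is processed between the displacement and the examination of $(i,j+1)$. Other agents' pairs of higher density might intervene and consume capacity. The cleanest fix is to read the corollary as the paper does, under the proviso that $i$ is the next to re-enter. Alternatively, I would show that any intervening pair must itself be a replacement at a position $\le j$, which leaves the inequality $w\ge s_i$ intact by the same computation applied inductively.
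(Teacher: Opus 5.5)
Your capacity computation ($w_{\mathrm{new}} = w_{\mathrm{old}} + s_i - s_{i''} \ge s_i$, using the modified check $w_{\mathrm{old}} - s_{i''} \ge 0$) and your value comparison at $j+1$ are correct, and they are what the paper uses. The gap is the step you flag yourself: you never show that $(i,j+1)$ is processed immediately after the swap. You cannot show it, because you read the hypothesis as ``$(i,j+1)$ was removed and is now returned.'' That reading carries no information. When $i$ is placed at $j$ by processing $(i,j)$, her pair $(i,j+1)$ is still unprocessed and gets removed, so it is always returned when she is displaced.

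Under your reading the statement is false. Take two positions with $a_1=1$, $a_2=1/2$, $W=10$, and agents $z=(9,3)$, $i=(10,4)$, $i''=(11,5)$. Then $d_{z1}=3>d_{i1}=2.5>d_{i''1}=2.2>d_{z2}=1.5>d_{i2}=1.25$, and the run goes as follows. Agent $z$ takes position $1$ ($w=7$). Agent $i$ replaces $z$ ($w=6$, and $(z,2)$ is returned). Agent $i''$ replaces $i$ ($w=5$, and $(i,2)$ is returned). Next $(z,2)$ outranks $(i,2)$ and fills the empty position $2$ ($w=2$). At $(i,2)$ the test $w-s_i<0$ stops the mechanism, leaving $i$ unassigned. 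The intervening pair here is an assignment to the empty position $j+1$. So your first fallback, that intervening pairs are replacements at positions $\le j$, is false. Your second fallback, assuming $i$ is the next to re-enter, assumes exactly what has to be proved.

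The missing idea is the paper's reading of the hypothesis, spelled out right after the corollary. While $i$ sat at $j$, the scan had already passed $d_{i,j+1}$; that is, densities smaller than $d_{i,j+1}$ had already been examined. The mechanism always processes the top of the current list, and the only insertions are the returned pairs of the agent just displaced. Hence every pair still in the list other than $i$'s ranks below the lowest density examined so far; the paper leaves this invariant implicit. On reinsertion, $d_{i,j+1}$ therefore heads the current ordering and is examined at the very next step. Nothing can change $w$ or the occupant of $j+1$ in between, and your capacity and value arguments then complete the proof. In the example above this condition fails: only densities $\ge 2.2 > d_{i2}$ had been examined when $i$ was displaced. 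This is exactly the situation the paper's footnote warns about.
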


This is an implication of the previous observation along with the fact that if densities smaller than $d_{i,j+1}$ have already been examined, then when agent $i$ is displaced from position $j$ and all the remaining agent-position pairs regarding $i$ are reinserted in the density list, $d_{i,j+1}$ is first in the current density ordering \footnote{Without these conditions, when agent $i$ is replaced from position $j$, $d_{ij+1}$ may not be the first density in the current list, and agent $i$ might have a smaller value and thus fail to replace the agent at position $j+1$ (even if she fits).}.

Before proceeding to the monotonicity analysis of Mechanism \ref{alg:augm_greedy_fix}, we state the following lemma that will be very useful.% provides a closed form for the incrementally built matching while Mechanism \ref{alg:augm_greedy_fix} parses the density list. 

\begin{lemma}
\label{lem:structure}
Consider an instance of Mechanism \ref{alg:augm_greedy_fix}, and let $d_{xy}$ be the last pair to be examined in the density list. Additionally, let $l$ be the lowest position that has been assigned. The allocation has the following structure: Each position $1 \le j\le l$, is assigned to the agent $\arg\max\limits_{i \, : \, d_{ij} \ge d_{xy}}v_{ij}$, as long as this agent hasn't been allocated to earlier positions.
\end{lemma}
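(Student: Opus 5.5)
We prove \Cref{lem:structure} by induction on the position index $j$, working with the threshold $d_{xy}$ fixed. Call a pair $(i,j)$ \emph{eligible} if $d_{ij}\ge d_{xy}$. The idea is that by the time the mechanism halts (or finishes the list), every eligible pair has been reached at least once, because the list is processed in decreasing density. Reinsertions only return pairs for positions an agent has not yet been assigned to. In the position auctions setting a displaced agent's next pair is for position $j+1$ (first structural observation), so the examined densities never go back above a point that would break the argument. Let $i_1^\ast$ be the agent of maximum $v_{i1}$ among eligible pairs for position $1$. Inductively, let $i_j^\ast$ be the agent of maximum $v_{ij}$ among eligible pairs for position $j$, excluding $i_1^\ast,\dots,i_{j-1}^\ast$. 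Ties are resolved by the replacement rule of step 2: equal value, then smaller size, then earlier in the density ordering. We aim to show that position $j$ ends with $i_j^\ast$, for every $j\le l$.

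\textbf{Base case $j=1$.} The value held at position $1$ never decreases and only changes through step 2 replacements, so the final occupant has the highest value among all agents that were compared against the occupant at position $1$. Every eligible pair $(i,1)$ is reached before termination, since it has density at least $d_{xy}$. When it is reached, either $i$ is already assigned elsewhere, or it is compared with the occupant. An agent assigned elsewhere when $(i,1)$ comes up must sit at a position $>1$. But agent $i$'s pairs are processed in decreasing density, and since $a_1>a_2>\cdots$ we have $d_{i1}>d_{ij'}$ for all $j'>1$. So $(i,1)$ is examined before any other pair of $i$, and $i$ cannot already be placed elsewhere. Hence every eligible agent is compared at position $1$. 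The only way such a comparison fails to happen is the capacity stop at $w-s_i<0$, and that occurs only at the last pair $(x,y)$.

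\textbf{Inductive step.} Suppose positions $1,\dots,j-1$ end with $i_1^\ast,\dots,i_{j-1}^\ast$. Take an eligible agent $i\notin\{i_1^\ast,\dots,i_{j-1}^\ast\}$ and trace her run. Each time she loses at some position $j'<j$, she is displaced, or never admitted, by someone of higher value. By \Cref{cor: re-enter} and the first structural observation, her next examined pair is $(i,j'+1)$, provided that pair is still at least $d_{xy}$ in the list. So $i$ descends position by position and is compared at position $j$, unless she settles at some $j'<j$. Settling at $j'<j$ is impossible, because $j'$ ends with some $i^\ast_{j'}\neq i$. Therefore she is compared with the occupant of $j$, and monotonicity of occupied values gives that the final occupant of $j$ has value at least $v_{ij}$. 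Conversely, the final occupant is itself eligible for $j$, since she was placed via a pair with density $\ge d_{xy}$. She is also not among $i_1^\ast,\dots,i_{j-1}^\ast$, since each agent holds one position. So she is exactly $i_j^\ast$, using the tie rule of step 2.

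\textbf{Main obstacle.} The delicate step is showing that the descent never skips position $j$. This has two parts. First, after a displacement, the reinserted pair $(i,j'+1)$ has to be reached before the mechanism stops; this uses that it sits above $d_{xy}$ in the current order, together with the modified stopping test $w-s_i<0$. Second, the one pair that triggers the stop, $(x,y)$ itself, must be handled. For $(x,y)$, we will argue that the stop happens before $x$ is compared. We therefore either exclude the boundary case $d_{ij}=d_{xy}$ for that single pair, or observe that the stop leaves the current occupant of $y$ unchanged and that the statement's clause ``as long as this agent hasn't been allocated to earlier positions'' absorbs it. We will also check that this stop does not affect positions $<y$. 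This bookkeeping of the ordering under reinsertions is where we expect most of the care to go.
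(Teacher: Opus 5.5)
Your route differs from the paper's. The paper inducts on the number of examined densities and asserts that each step preserves the structure. You instead fix the final threshold and induct on positions: every eligible agent not holding a higher position is eventually compared at $j$ (the descent via \Cref{cor: re-enter}), and the final occupant of $j$ beats all of them. Reasoning only about the final state is a real advantage. In the middle of a displacement chain the structure is temporarily false: right after $i'$ is pushed out of $j$, she may be eligible for $j+1$ and beat its occupant without having been compared there. A step-by-step invariant has to cope with this, and the paper's three-case induction does not address it.

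There is, however, a missing ingredient. Your converse step says the final occupant of $j$ was placed via a pair of density at least $d_{xy}$. This needs $d_{xy}$ to be the \emph{smallest} density ever examined, which is exactly the claim the paper's proof opens with. Your justification (``examined densities never go back above a point'') is not right: the reinserted pair $(i',j+1)$ is frequently examined immediately, above the running minimum. What does hold is that such a pair always passes the stop test. The displacer $z$ needed $w\ge s_z$, so the new free space $w-s_z+s_{i'}$ is at least $s_{i'}$. Hence, if the run ends with a capacity stop, the stopping pair is the minimum examined density. If instead the list is exhausted, the claim can fail. Take $a_1=1$, $a_2=0.9$, $W=1000$, $(v_A,s_A)=(10,1)$, $(v_B,s_B)=(20,100)$. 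Then $A$ takes position $1$, $B$ displaces her at density $0.2$, and $(A,2)$, of density $9$, is the last pair examined. With $d_{xy}=9$, agent $B$ is not eligible for position $1$ although she holds it. You should therefore take the threshold to be the minimum examined density, and prove that it equals $d_{xy}$ when the run ends with a stop.

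At the boundary, only your first option (excluding the stopping pair) works. The stopping agent $x$ satisfies $d_{xy}\ge d_{xy}$ and is typically allocated nowhere, so the clause about earlier positions does not absorb her. For example, take a single position, $W=1.5$, $(v_A,s_A)=(1,1)$ and $(v_B,s_B)=(10,20)$. Agent $A$ is placed, and $(B,1)$ triggers the stop with $d_{xy}=0.5$. Then $B$ is the value maximiser among agents with $d_{i1}\ge 0.5$, yet $A$ keeps the position. The same issue arises for any pair tied with $d_{xy}$ but ranked after $(x,y)$, since it is never examined. So eligibility must mean ranked strictly ahead of the stopping pair, or density at least the minimum examined one when the list is exhausted. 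With these two repairs, your position-by-position argument goes through.
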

\begin{proof}
    First of all, notice that if  $d_{xy}$ is the last density examined in the density list, it should also be the smallest among the examined densities. This is a direct consequence of Corollary \ref{cor: re-enter}, which guarantees that if the current examined density is not the smallest, then the examined agent will be assigned, and then the mechanism will proceed to the next density. 
    
    We will prove the statement by using induction on the number of examined densities. If there is only one examined density, then the statement trivially holds. Now, suppose that this is  also true when $k$ densities have been examined. We will prove that the statement holds for $k+1$ examined densities. When the density $k+1$ is examined, then there are three possible scenarios: 1) The mechanism terminates (because the agent does not fit) and thus the statement holds because of the inductive hypothesis, 2) The mechanism does not terminate, but the examined agent does not change the solution (because she cannot replace the current occupant of the position examined), and thus the statement again holds because of the inductive hypothesis, 3) The mechanism does not terminate, but the examined agent changes the solution. The latter implies that either she now occupies a previously unoccupied position, or she replaces an agent because she has higher value (or has the same value but is lexicographically better). In both cases the statement holds because of the inductive hypothesis and the fact that the change in the solution followed the desired structure.
\end{proof}

%%%%%%%%%%%%%%%%%%%%%%%%%%%%%%%%%%%%%%%
%\subsection{Monotonicity}\label{sec:mon}
%%%%%%%%%%%%%%%%%%%%%%%%%%%%%%%%%%%%%%%

\paragraph{Monotonicity of Mechanism \ref{alg:augm_greedy_fix}.} We will now prove that Mechanism \ref{alg:augm_greedy_fix} is monotone. Consider profiles $(v_i, \valsmt)$ and $(v'_i, \valsmt)$, such that $v'_i>v_i$. Let $\allocation$ and $\allocation'$ be the final assignments respectively. Suppose that agent $i$ is assigned to position $j$ under profile $(v_i, \valsmt)$. We will show that in the final assignment $\allocation'$, agent $i$ cannot be worse off, i.e., she cannot end up unassigned (Lemmata \ref{lem:not-unassigned} and \ref{lem:not-worse}), nor assigned to a lower position $j'>j$ (\Cref{lem:not-worse}). We will prove each of these cases by contradiction.

As a remark, we point out that if an agent $i$ at any point enters the solution under profile $(v_i, \valsmt)$, this guarantees that she will also enter the solution under $(v'_i, \valsmt)$ (even if she is removed after that).

\begin{lemma}
\label{lem:not-unassigned}
Under profile $(v'_i, \valsmt)$, while running Mechanism \ref{alg:augm_greedy_fix}, it is impossible for agent $i$ to be assigned to position $j'<j$, and %subsequently
then be replaced without ever being re-assigned to another position.   
\end{lemma}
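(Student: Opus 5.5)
We argue by contradiction. Suppose that under $(v'_i,\valsmt)$ agent $i$ is at some point assigned to a position $j'<j$, is later displaced by some agent $i''$, and never re-enters the solution. By Corollary \ref{cor: re-enter}, a displaced agent whose next entry $d_{i,j'+1}$ was skipped is immediately re-assigned to $j'+1$. So the only way $i$ can stay out for good is that, at the moment of displacement, the mechanism has not yet examined densities down to $d'_{i,j'+1}$. In that case $i$'s reinserted pairs are not first in the list. The run must then terminate, because some pair $(x,y)$ with $d_{xy}\ge d'_{i,j'+1}$ fails the stopping test $w-s_x<0$, before $d'_{i,j'+1}$ is reached. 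The plan is to show that this forces a contradiction with the fact that, under $(v_i,\valsmt)$, agent $i$ ends at position $j$.

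\textbf{Step 1: structure via Lemma \ref{lem:structure}.} Let $d'_{xy}$ be the last density examined under the raised bid, and $d_{x^*y^*}$ the last one under the original bid. By Lemma \ref{lem:structure}, each of the two allocations is a "greedy-by-value above a density threshold" assignment of the top positions. Under $(v_i,\valsmt)$, agent $i$ holds $j$, so $d_{ij}\ge d_{x^*y^*}$ and $v_{ij}$ is the largest value among the unassigned agents with density at least $d_{x^*y^*}$ at $j$. Under the raised bid, $i$ ends unassigned. Since $d'_{i,j''}\ge d_{i,j''}$ for every position $j''$, the threshold $d'_{xy}$ must be strictly larger than $d_{i,j'+1}$, and indeed larger than $d_{ij}$. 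Otherwise $i$ would have been examined for $j'+1,\dots,j$ with a value at least the one that won $j$ originally, and she would have been placed.

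\textbf{Step 2: compare the stopping events.} The key step is to show that the terminating pair $(x,y)$ under the raised bid cannot stop earlier in the density order than the original run reached. The stopping test in Mechanism \ref{alg:augm_greedy_fix} is $w-s_x<0$ and does not depend on who occupies $y$. So I will compare the used capacity $W-w$ at the moment $(x,y)$ is examined in the two runs. The set of agents with density above $d_{xy}$ is the same in both runs, except that $i$ is only more prominent under the raised bid. By Lemma \ref{lem:structure}, the assigned set at that moment is determined by the top values in positions $1,\dots,l$ over this candidate set. In the raised run, $i$ has just been removed from the solution at that moment. I will argue that the occupied set is then the original run's occupied set at the same density point, with $i$ exchanged for at most one agent $i''$. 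The agent $i''$ that displaced $i$ has $v_{i''j'}\ge v'_{ij'}>v_{ij'}$, so $i''$ also precedes $i$ in the original structure.

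\textbf{Step 3: conclude.} I will then show that in the original run the pair $(x,y)$ is also examined with at most the same free capacity, so the original run stops at or before $d_{xy}>d_{ij}$. Since $d_{ij}$ is then never reached, $i$ would not be at $j$ in $\allocation$, which is a contradiction. The main obstacle is Step 2: making the capacity comparison rigorous, because the displacement chain triggered by $i$'s higher bid can change which agents sit where. I would first try to prove an invariant by induction over the density list, in the style of Lemma \ref{lem:structure}: at every common density threshold the raised run uses no more capacity than the original, up to the term $s_{i''}-s_i$. If that fails, I would fall back on the tie-breaking rule $s_i<s_{i'}$ for equal values.
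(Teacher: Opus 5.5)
There is a genuine gap, and it sits exactly at the step you flag as the ``main obstacle''.

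First, your reduction to a terminating pair $(x,y)$ with $d_{xy}\ge d'_{i,j'+1}$ is incomplete. After $i$ is displaced from $j'$ with $d'_{i,j'+1}$ still in the list, nothing forces the run to stop before reaching that entry. The run may examine $i$'s reinserted entries, and then either she loses the value comparison at $j'+1,j'+2,\dots$, or the run stops on her own pair because $w-s_i<0$. Your Step~1 dismisses these cases with ``she would have been placed''. That claim is precisely the content of the lemma, and it needs the same capacity comparison you postpone.

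Second, the structure you aim for in Step~2 is not what Lemma~\ref{lem:structure} yields. Take a common point of the density list at which $i$ is unassigned in the raised run. Then she is unassigned in the original run there as well, since she has a lower value and fewer positions above the threshold. So the two partial assignments coincide, and there is no exchange of $i$ for $i''$. With the exchange as you state it, Step~3 actually fails. Suppose $i''$ displaces $i$ at $j'$ with $v_{i''}>v'_i$ through an entry with $d_{i''j'}\le d'_{ij'}$. Then $s_{i''}>s_i$, so the raised run uses strictly more capacity than the original. That is the wrong direction for concluding that the original run stops no later than $(x,y)$, and an invariant with slack $s_{i''}-s_i$ cannot repair it.

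The missing idea, which the paper uses, is to synchronize the two runs at the displacement event rather than at termination.
\begin{itemize}
\item Let $d_{xy}$ be the pair whose examination removes $i$ from $j'$. Corollary~\ref{cor: re-enter} gives $d'_{ij'}\ge d_{xy}>d'_{i,j'+1}>d_{ij}$.
\item Observation~\ref{obs:pigeon} shows that the original run also examines $d_{xy}$.
\item Lemma~\ref{lem:structure} then shows that immediately afterwards both runs hold the identical assignment, hence the same free capacity.
\item From that point the two runs face the same list, except that $i$'s remaining entries are ranked higher in the raised run. Since the original run goes on to place $i$ at $j$, the raised run must re-admit her at one of $j'+1,\dots,j$.
\end{itemize}
This covers every continuation at once, and it never requires comparing $s_i$ with the size of her displacer.
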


\begin{proof}
For the sake of contradiction, suppose that this is possible. Consider the point where $d'_{ij'}$ is examined in profile $(v'_i, \valsmt)$, where $d'_{ij'}$ denotes the density of agent $i$ for position $j$ under profile $(v'_i, \valsmt)$,  and agent $i$ is assigned to position $j'$. For $i$ to be removed from this position, at some point after that, density $d_{xy} \leq d'_{ij'}$ has been examined \footnote{Notice that in case of equality, for this to happen, $d'_{ij'}$ precedes $d_{xy}$ in the density ordering.}, for which we have that $y \leq j'$, and $v_x>v'_i$. Notice that $y$ can be equal to $j'$, meaning that agent $x$ directly replaces agent $i$, or  $y$ is smaller than $j'$, meaning that agent $x$ replaced an agent in a position before $j'$, and this led one of the agents that at that point were part of the solution to replace agent $i$ from position $j'$. Because of the assumption that agent $i$ was not re-assigned to any lower position, we infer that $d'_{ij'}\geq d_{xy} > d'_{ij'+1} > d_{ij}$, from \Cref{cor: re-enter}\footnote{Notice that for the same reason, $d_{xy} \neq d'_{ij'+1}$ as  agent $i$ precedes agent $x$ in the density ordering.}. We proceed with the following useful observation: 

\begin{observation}\label{obs:pigeon}\normalfont 
Density $d_{xy}$ will be examined by Mechanism \ref{alg:augm_greedy_fix} under profile $(v_i, \valsmt)$ as well. To see this, assume for contradiction that this is not the case. Notice that since $d_{xy} > d_{ij}$, the only way that $d_{xy}$ was not examined in this profile, is because agent $x$ was assigned at a position $y'<y$, and was never removed. Now, going back to profile $(v'_i, \valsmt)$, position $y'$ must be occupied by a different agent $i'$ of higher value, or the same value but better place in the density ordering, due to \Cref{lem:structure}. Going again back to profile $(v_i, \valsmt)$, under the same arguments, agent $i'$ has to occupy a position $y''<y'$. By repeated applications of this argument, and since the positions are finite, at some point we will reach position 1, which will have different agents, meaning that one of the two solutions will not have the best agent (according to either value or density order), although this agent is available, a contradiction due to \Cref{lem:structure}.
\end{observation}

 Hence, the entry $d_{xy}$ will be examined by Mechanism \ref{alg:augm_greedy_fix} under both profiles. Moreover, after it is examined (and agent $i$ is removed from position $j'$ in the case of profile $(v'_i, \valsmt)$), the two assignments should be identical due to \Cref{lem:structure}. This means that, after that point, the remaining capacity is the same under both profiles while also $d'_{ij'+1} \ge d'_{ij}> d_{ij}$. Thus, because of our base assumption of $i$ being assigned to position $j$ under profile $(v_i, \valsmt)$, under $(v'_i, \valsmt)$ agent $i$ will have to be reassigned to at least one of the positions $j'+1, \dots , j$,  leading to a contradiction.
\end{proof}

\begin{lemma}
\label{lem:not-worse}
Under bidding profile $(v'_i, \valsmt)$, it is impossible for agent $i$ to be assigned to position $j$ and subsequently be replaced by another agent. 
\end{lemma}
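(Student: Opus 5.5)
We argue by contradiction, mirroring the proof of \Cref{lem:not-unassigned}. Suppose that under $(v'_i,\valsmt)$ agent $i$ is at some point assigned to position $j$ and is later displaced. Two kinds of displacement are possible.

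In the first, $i$ is displaced and then re-assigned to some lower position $j''>j$, possibly cascading further down or out. If she drops out entirely after some re-assignment, that case is handled by the argument of \Cref{lem:not-unassigned} applied at the last position she occupied. So we may focus on the first time she is pushed below $j$.

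Fix the density entry $d_{xy}$ whose examination triggers the removal of $i$ from position $j$. This can be a direct replacement ($y=j$) or a chain of replacements started at some $y<j$. By the replacement rule of Mechanism \ref{alg:augm_greedy_fix}, and since the value of an occupied position never decreases, the agent now holding position $j$ has realized value at least $v'_i a_j > v_i a_j$, with equality excluded or broken in her favour. Moreover, the entry of $i$ for position $j$ was examined before $d_{xy}$, so $d_{xy}\le d'_{ij}$.

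The key step is to transfer this event to the original profile $(v_i,\valsmt)$. First we show, exactly as in \Cref{obs:pigeon}, that $d_{xy}$ is also examined under $(v_i,\valsmt)$. If it were not, agent $x$ would be permanently placed at some $y'<y$. Then the chain-of-positions argument, using \Cref{lem:structure} in both profiles, reaches position $1$ with two different "best" agents, a contradiction.

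Next we invoke \Cref{lem:structure} with $d_{xy}$ as a threshold. In both runs, each position up to the lowest assigned one is held by the highest-value agent, among those not yet placed higher, whose density for it is at least the current threshold. Agent $i$'s higher bid only raises her densities, so the set of eligible agents other than $i$ is the same in both profiles. The agent $z$ that beats $i'$s inflated value $v'_i$ at position $j$ under $(v'_i,\valsmt)$ therefore also beats $v_i$ at $j$ under $(v_i,\valsmt)$, once $d_{xy}$ has been examined there. Hence, in the original run, $i$ could not be the final occupant of position $j$: the value of position $j$ is monotone non-decreasing during a run, so she would be displaced there too. This contradicts the hypothesis that $i$ ends at $j$ in $\allocation$.

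The main obstacle is making the comparison of the two runs rigorous. The orders in which entries are examined differ, because $i$'s entries move earlier. We must also argue that, after $d_{xy}$ is processed in both runs, the positions $1,\dots,j$ are held by the same agents, or at least that position $j$ carries value at least $v'_i a_j$ under $(v_i,\valsmt)$. Finally, we must rule out that the run under $(v_i,\valsmt)$ stops before reaching $d_{xy}$ because of the capacity condition, even though $d_{xy}\le d'_{ij}$ might exceed $d_{ij}$ or not.

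To handle these, I would first split into the cases $d_{xy}>d_{ij}$ and $d_{xy}\le d_{ij}$. In the first case, the final assignment of $i$ to $j$ under $(v_i,\valsmt)$ happens after threshold $d_{xy}$ is passed, and \Cref{lem:structure} applies directly. In the second case, I would use the stopping rule of Mechanism \ref{alg:augm_greedy_fix} together with \Cref{cor: re-enter} to show that the remaining capacity under the original profile is no larger. This forces $d_{xy}$, or an entry displacing $i$, to be reached, and again yields the contradiction.
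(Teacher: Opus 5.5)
Your case $d_{xy}>d_{ij}$ is in line with the paper's first step. Once the run under $(v_i,\valsmt)$ is known to get past the entry that displaces $i$, \Cref{lem:structure} and the chase of \Cref{obs:pigeon} put an agent of value above $v_i$ at position $j$. The genuine gap is the complementary case $d_{xy}\le d_{ij}$, which carries all of the difficulty of the lemma and which your proposal only names.

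In that case, \Cref{obs:pigeon} cannot be reused ``exactly''. Its proof only rules out that $x$ is parked at a higher position. It relies on $d_{xy}>d_{ij}$ to exclude the other possibility: that the original run stops on capacity before reaching $d_{xy}$, which is precisely what is at stake. Your remedy is to show that the remaining capacity under $(v_i,\valsmt)$ is \emph{no larger}. That points the wrong way, since less free space would make the original run stop \emph{earlier}. You also give no reason for any cross-run capacity comparison. \Cref{cor: re-enter} concerns re-entry at $j+1$ within a single run and says nothing about the relative weights of the two runs.

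The missing idea, which is how the paper proceeds, is to anchor the argument at the entry where the run under $(v_i,\valsmt)$ \emph{stops}, not at the displacing entry of the raised run.
\begin{itemize}
\item First, the raised run must go strictly past that stopping entry. Otherwise the agent displacing $i$ from $j$ would already be eligible for $j$ in the original run and, by \Cref{lem:structure}, would hold $j$ there; this is your case-1 reasoning. By \Cref{obs:pigeon}, the raised run examines that entry without stopping.
\item Then compare weights at that point. If $i$ is already at $j$, the two partial solutions coincide.
\item Otherwise $i$ sits at some $j'<j$. Each further entry needed to push her down to $j$ comes after the stopping entry in the density order, yet carries a value higher than $i$'s, hence a larger size. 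So by the time $i$ reaches $j$, the raised solution is the heavier one.
\end{itemize}
The stopping test of Mechanism \ref{alg:augm_greedy_fix} is $w-s<0$, independent of the current occupant of the examined position; this is exactly what the modified stopping rule buys. Hence an entry that does not trigger a stop in the heavier raised run cannot trigger one in the original run. This contradicts the original run having stopped there.
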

\begin{proof}
    For the sake of contradiction, suppose that this is possible. Let density $d_{xy}$ to be the one that the mechanism stops under profile $(v_i, \valsmt)$, and $d_{x'y'}$ to be the one that it stops under profile $(v'_i, \valsmt)$. Finally, let agent $i^*$ to be the one that occupies position $j$ under $(v'_i, \valsmt)$, something that implies that either $v'_i<v_{i^*}$, or that $v'_i=v_{i^*}$ but $i^*<i$. First notice that it should be the case that $d_{xy}>d_{x'y'}$. The reason for that if $d_{xy}\leq d_{x'y'}$, this would mean that Mechanism \ref{alg:augm_greedy_fix} would stop earlier or at the same point under profile  $(v'_i, \valsmt)$. At the same time $d_{i^*j} \leq d_{xy}$ and agent $i^*$ does not occupy position $j$ under profile $(v_i, \valsmt)$, something that is not possible because of \Cref{lem:structure} and \Cref{obs:pigeon}. Therefore, we derive that $d_{xy}>d_{x'y'}$. Moreover, Mechanism \ref{alg:augm_greedy_fix} will examine at some point $d_{xy}$ under $(v'_i, \valsmt)$, again due to \Cref{obs:pigeon}. 

    Consider the run of Mechanism \ref{alg:augm_greedy_fix} under profile $(v'_i, \valsmt)$, when point $d_{xy}$ is reached. Due to the discussion above, agent $i$ should be in a position $j'$, for which we have either $j'=j$ or $j'<j$. In the first case, due to \Cref{lem:structure} and \Cref{obs:pigeon} we derive that the current solutions between the two profiles are exactly the same, therefore, since Mechanism \ref{alg:augm_greedy_fix} does not stop at $d_{xy}$ under $(v'_i, \valsmt)$, it should not stop at $d_{xy}$ under $(v_i, \valsmt)$ as well, a contradiction. In the second case, for agent $i$ to be removed from position $j'$, Mechanism \ref{alg:augm_greedy_fix} has to go beyond point $d_{xy}$, considering smaller densities. By doing so, agent $i$ will move to the next position, and for each position that agent $i$ occupies (til she reaches position $j$), one new such density has to be examined. Now, the crucial point to observe is that since these densities come after $d_{xy}$, and can replace agent $i$ (or agents that have a better position than $i$), this implies that these agent have a value higher than the one of $i$, and thus a bigger size. Therefore, when $i$ reaches position $j$, the current solution under profile $(v'_i, \valsmt)$, should be heavier than the one under $(v_i, \valsmt)$. Therefore, since Mechanism \ref{alg:augm_greedy_fix} does not stop after that point (as $i$ has not yet been replaced from position $j$)  under $(v'_i, \valsmt)$, it should not stop at $d_{xy}$ (that precedes that point) under $(v_i, \valsmt)$ as well, a contradiction. 
\end{proof}

By combining the above two lemmas, we get the following theorem.

\begin{theorem}\label{thm:mon}
    Mechanism \ref{alg:augm_greedy_fix} is monotone.
\end{theorem}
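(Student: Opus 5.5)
The plan is to argue by contradiction, combining Lemma \ref{lem:not-unassigned} and Lemma \ref{lem:not-worse}. Fix agent $i$, the values $\valsmt$, and two bids $v'_i > v_i$. Suppose that under $(v_i,\valsmt)$ agent $i$ ends in position $j$. We must show that under $(v'_i,\valsmt)$ she ends in some position $j'' \le j$. Since CTRs are strictly decreasing, her realized value $v'_i a_{j''}$ is then at least $v'_i a_j$, which is exactly the definition of monotonicity. If $i$ is unassigned under $(v_i,\valsmt)$, there is nothing to prove.

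The first step is to show that under $(v'_i,\valsmt)$ agent $i$ enters the solution at all, and does so at some position $j' \le j$. Her densities $d'_{ij'}$ strictly exceed $d_{ij'}$ for every $j'$, so her entries move earlier in the density list. Using Observation \ref{obs:pigeon} and Lemma \ref{lem:structure}, I would argue that the partial solutions of the two runs agree up to the point where $i$'s first entry is met. The higher bid can only make $i$ win the comparison at a given position more easily, so the position she first takes is no worse than $j$. This is the remark made just before Lemma \ref{lem:not-unassigned}.

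The second step follows her trajectory. By the structural observation before Corollary \ref{cor: re-enter}, each time $i$ is displaced her next examined position is the next one down. Lemma \ref{lem:not-unassigned} rules out being displaced from any $j'<j$ without being reassigned. Corollary \ref{cor: re-enter} then gives that she moves to $j'+1$, so her positions form a chain $j', j'+1, \dots$ with no gap leading to exclusion. If the chain stops strictly before $j$, we are done. If it reaches $j$, Lemma \ref{lem:not-worse} says she cannot be displaced from $j$ either. So her final position satisfies $j'' \le j$.

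The main obstacle is making the case split exhaustive. One must check that no run under $(v'_i,\valsmt)$ places her first at a position worse than $j$, or skips position $j$ directly, for example by terminating while she sits between displacements. I would close this gap using Corollary \ref{cor: re-enter}: since $d'_{i,j'+1} > d_{ij}$, and $d_{ij}$ is examined in the original run, the corresponding entry of $i$ under the higher bid is reached before termination. Finally, since the allocation is monotone, Myerson's lemma supplies payments making Mechanism \ref{alg:augm_greedy_fix} truthful.
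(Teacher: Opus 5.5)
Your overall route matches the paper's, whose proof is just the combination of Lemma~\ref{lem:not-unassigned} and Lemma~\ref{lem:not-worse} with the remark that an agent who enters under $(v_i,\valsmt)$ also enters under $(v'_i,\valsmt)$. Your first step (first entry at a position no worse than $j$) is a fair elaboration of that remark. The step that does not work as written is the one meant to make the case split exhaustive. It must exclude that $i$, after being displaced from some $j'<j$, ends strictly below $j$ without ever occupying $j$. The literal statement of Lemma~\ref{lem:not-worse} does not cover that case.

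For it you invoke Corollary~\ref{cor: re-enter} to get a gap-free chain $j',j'+1,\dots$. Its hypothesis, that her entry for $j'+1$ is already at the top of the list when she is displaced, is never verified, and your patch does not supply it, for two reasons.

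\begin{itemize}
\item That $d_{ij}$ is examined in the run under $v_i$ says nothing directly about how far the run under $v'_i$ goes. The two runs diverge once $i$'s raised entries are processed, and they can stop at different densities; the proof of Lemma~\ref{lem:not-worse} is built around exactly this, comparing $d_{xy}$ with $d_{x'y'}$.
\item Even if $d'_{i,j'+1}$ is eventually reached, it may not be reached immediately. Nothing in your argument then prevents the occupant of $j'+1$ from being replaced in the meantime by an agent of value above $v'_i$. In that case $i$ skips $j'+1$ and keeps moving down, possibly past $j$.
\end{itemize}

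To close this with the paper's tools, use what the proof of Lemma~\ref{lem:not-unassigned} actually establishes, which is stronger than its statement. Once the entry that displaces $i$ from $j'<j$ has been examined, the two runs have the same partial assignment and the same remaining capacity, and from there $i$ must be reassigned to one of $j'+1,\dots,j$. Iterate this from her first position, and finish with Lemma~\ref{lem:not-worse} once she sits at $j$. This gives a final position at most $j$ without Corollary~\ref{cor: re-enter} and without comparing stopping points across the two runs.
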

    
Combined with Myerson's payment scheme (\cite{Myerson}), this leads to a truthful mechanism. Finally, we define the following mechanism, that is universally truthful, as a randomization between two monotone mechanisms.

\begin{alg}
\label{alg:final_mech_fix}
Consider a value profile $(\vals_1, \vals_2, \dots , \vals_n)$, where $\vals_i = (v_{i1},v_{i2}, \dots , v_{ik})$. Let $\allocation_1$ be the output of Mechanism \ref{alg:augm_greedy_fix} and $\allocation_2$ the output of $G(v_{max})$.
\textbf{Output}: With probability $\frac{1}{4}$ output $\allocation_1$ and with probability $\frac{3}{4}$ output $\allocation_2$.
\end{alg}

%%%%%%%%%%%%%%%%%%%%%%%%%%%%%%%%%%%%%%%
%\subsection{Approximation}\label{sec:apx}
%%%%%%%%%%%%%%%%%%%%%%%%%%%%%%%%%%%%%%%

\paragraph{Approximation.} We conclude by demonstrating that Mechanism \ref{alg:final_mech_fix} continues to provide a constant approximation, albeit with a small loss incurred because of the modification that we implemented in Mechanism \ref{alg:augm_greedy}. The approximation analysis regards the general version of the problem (as studied in Section \ref{sec:alg}), and not the restricted scenario of position auctions with a capacity constraint (although it applies in this case as well).

\begin{proposition}\label{prop:ubfix}
Mechanism \ref{alg:final_mech_fix} is universally truthful, and provides (in expectation) an 12-approximation of the optimal social welfare, for the matching with a capacity constraint setting. Moreover, this approximation is not always better than 8 (and 4 in the case of position auctions with a capacity constraint).
\end{proposition}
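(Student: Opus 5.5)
Universal truthfulness is immediate once both components are monotone. Mechanism \ref{alg:augm_greedy_fix} is monotone by \Cref{thm:mon}. $G(v_{max})$ is monotone: it outputs the single highest-value feasible pair, and raising $b_i$ only raises the realized values of $i$'s pairs. If $i$ was the winner, it stays the winner, and with common CTRs it stays in position 1. Each component, paired with its Myerson payments, is therefore a truthful deterministic mechanism, and Mechanism \ref{alg:final_mech_fix} is a distribution over them. The rest of the proof is the approximation bound and the lower-bound instances.

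For the upper bound, I would rerun the analysis of Lemma \ref{lem:main_apprx} and Theorem \ref{thm:main_approx} for Mechanism \ref{alg:augm_greedy_fix}, with $M^G$ now its output. The only change is the stopping rule in step 2: the mechanism stops when $w - s_i < 0$ even though swapping out $i'$ would free $s_{i'}$. So the critical pair $(i^e,j^e)$ still has size exceeding the remaining space, $W - W^G < s_{i^e}$, and the "smaller density" part of $M^*\setminus M^t$ is still bounded by $2v(G(v_{max}))$ via Lemma \ref{lem:rearlemma}. The replacement rule is also weaker: it no longer requires $s_i \ge s_{i'}$. Every statement of the form "the occupier of $j$ has value at least $v_{ij}$" survives, since occupiers' values only increase.

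What breaks is the accounting at the stopping point. In the original analysis, every pair reached before stopping was either inserted or blocked by a higher-valued occupier. Now a pair can be refused because of the stricter stopping rule, and the used capacity $W^G$ may be much less than before. I would therefore charge one extra copy of $v(G(v_{max}))$ in each of the two places where the stopping pair appears: inside Lemma \ref{lem:main_apprx} (set $C$, scenario 2) and in the tail of $M^*\setminus M^t$. The target is
\[
v(M^*) \le 3\,v(M^G) + 9\,v(G(v_{max})),
\]
which gives $12\cdot\bigl(\tfrac14 v(M^G)+\tfrac34 v(G(v_{max}))\bigr)$ and matches the probabilities $1/4$ and $3/4$. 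The main obstacle is justifying the coefficient $9$ on $G(v_{max})$. I would first bound the tail part by Lemma \ref{lem:rearlemma} using sizes at most $2s_{i^e}$ plus a blocked occupier, and then count carefully how many times the critical pair and its displaced occupant are charged.

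For the lower bounds, I would adapt the 3-instance from Theorem \ref{thm:main_approx}. Mechanism \ref{alg:augm_greedy_fix} should halt early with value about $v(G(v_{max}))$, call it $V$, while OPT collects many small unit agents worth roughly $8V$ in general and $4V$ in the position-auction case. Since both components then return about $V$, the ratio is about $8$, respectively $4$. In the position-auction case, common CTRs limit how many positions can be filled with comparable value, which is why the bound drops to $4$.
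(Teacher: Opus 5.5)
\textbf{The genuine gap is in the lower bounds.} You give no instance for either the $8$ or the $4$, and the regime you aim for cannot deliver the $8$.

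If both components return about $V$, the weights $\frac{1}{4},\frac{3}{4}$ play no role. Moreover, the rerun you propose for the upper bound, done correctly (see below), gives $v(M^*)\le 3v(M^G)+3v(G(v_{max}))\approx 6V$, with $M^G$ the output of Mechanism~\ref{alg:augm_greedy_fix}. The $3$-instance of Theorem~\ref{thm:main_approx} is also not a position-auction instance, since agents $1$--$3$ each value only one specific position.

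The idea you are missing is that both bounds come from the asymmetric randomization. Take instances in which $v(G(v_{max}))$ is a vanishing fraction of the optimum as $k\to\infty$. The expected welfare is then essentially $\frac{1}{4}$ of what Mechanism~\ref{alg:augm_greedy_fix} obtains, so the ratio is $4$ times that mechanism's own ratio on the instance.

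For position auctions the paper takes $k$ agents $(V,1)$, one agent $(V+1,k)$, $W=k$ and nearly equal CTRs. Mechanism~\ref{alg:augm_greedy_fix} is optimal while $G(v_{max})$ gets only $V+1$, so the ratio tends to $4$.

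For general matching it takes equal-size agents that all fit. Odd agent $i$ values position $i$ at $V$ and position $i+1$ at $V+\epsilon$; even agent $i$ values only position $i$, at $V$. The mechanism places the odd agents on the even positions, which blocks the even agents. It gets $\frac{k}{2}(V+\epsilon)$ against $kV$, so the ratio tends to $8$.

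So the difference between $4$ and $8$ is not about common CTRs limiting how positions fill. It is Mechanism~\ref{alg:augm_greedy_fix} being optimal on one instance and losing a factor $2$ on the other.

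\textbf{Truthfulness and the upper bound.} Your truthfulness argument is fine. For the upper bound you take a different route from the paper. The paper couples the runs of Mechanisms~\ref{alg:augm_greedy} and~\ref{alg:augm_greedy_fix} on the same instance and argues $v(M^G)\le v(M^G_{mod})+2v(G(v_{max}))$, where $M^G$ is the output of Mechanism~\ref{alg:augm_greedy} and $M^G_{mod}$ that of Mechanism~\ref{alg:augm_greedy_fix}. It then substitutes this into Theorem~\ref{thm:main_approx}, which is exactly where $3v(M^G_{mod})+9v(G(v_{max}))$ comes from.

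Rerunning the analysis directly is legitimate and avoids that coupling. The coupling is delicate, because the two mechanisms also differ in the replacement rule, not only in when they stop. However, your diagnosis of what breaks is mistaken. A pair refused by the new stopping rule halts the mechanism, so it \emph{is} the stopping pair, exactly as before. How $W^G$ compares with the run of Mechanism~\ref{alg:augm_greedy} plays no role.

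Your own observations already suffice: $W-W^G<s_{i^e}$; a reached pair is rejected only by an occupier of at least its value; and occupier values never decrease. With these, Lemma~\ref{lem:main_apprx} and Theorem~\ref{thm:main_approx} go through with no extra charges and give $v(M^*)\le 3v(M^G)+3v(G(v_{max}))$, which already implies the factor $12$. The coefficient $9$ is a ceiling, not something to justify. Note also that adding one extra copy of $v(G(v_{max}))$ in each of your two places would give $5$, not $9$.
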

\begin{proof}
\label{app:ub-lb-fix}
Mechanism \ref{alg:final_mech_fix} is universally truthful, because it is a randomization between two deterministic monotone mechanisms.

Notice that the only difference between Mechanism \ref{alg:augm_greedy} and Mechanism \ref{alg:augm_greedy_fix}, is the fact that given the same instance, Mechanism \ref{alg:augm_greedy_fix} might stop earlier. The reason for this, is that the agent who is currently examined, is examined for a position that is occupied, and she fits after the replacement of the current occupant, but not without the replacement. There are 2 cases to examine:
\begin{enumerate}
    \item The agent that is currently examined, say $i$ for position $j$, has smaller or equal density than the current occupant of position $j$, that $i$ would have replaced under Mechanism \ref{alg:augm_greedy}. This implies that $i$ has at least the same value and at least the same size as the current occupant. In that case, the worst possible scenario would be that this agent would add her total value to the solution, would remove no value at all, and would not affect the available space. Now notice that from that point on-wards, and based on how the mechanism runs, the remaining available space cannot be covered by a value that is higher than the one of agent $i$ for position $j$ (as she had at the point the highest available density). Moreover, the value of the already occupied space can also not be improved (through replacements from other agents) for the same reason \footnote{Notice that there is a chance that the currently occupied space might be decreased because an agent replaces someone with much better density and smaller value, and then, the removed agent might the replace someone else (that occupies a different position) that has bigger size. However, such sequences cannot improve the total value of this space as before the replacements this space was occupied by agents with better densities.}. So, the value that is lost in the end is at most two times the value of the not included agent, which in turn can be bounded by $2\cdot v(G(v_{max}))$. Therefore, we can conclude that if $v(M_{mod}^G)$ is the social welfare produced by the modified algorithm, then  %$v(M_{mod}^G)\leq v(M^G)+2\cdot v(G(v_{max}))$ 
    $v(M^G)\leq v(M_{mod}^G)+2\cdot v(G(v_{max}))$.
    \item The agent that is currently examined, say $i$ for position $j$, has greater density than the current occupant of position $j$, that $i$ would have replaced under Mechanism \ref{alg:augm_greedy}. As this agent has higher density for this position, and she is currently available, this implies that she was occupying a position when the current occupant of position $j$ was placed there (notice that this position could not be $j$), and someone replaced her (or caused the reaction of her replacement), who has smaller density than the occupant of position $j$. In such a case, when $i$ is replaced, she has to be the next agent that will be examined to enter the solution. The only way that this is not the case, is that there are agents that have a better density from her for some position, but in such a scenario, they should have been examined before the replacement of agent $i$. Therefore, as agent $i$ is being replaced, and she is the next one that will be examined, she fits the solution because of the modification.  %Initially notice that this agent has to have either greater value for this position that the current occupant, or the same value but smaller weight, as otherwise, the replacement could have happen in Mechanism \ref{alg:augm_greedy} as well. 
\end{enumerate}

Given the discussion above, we use the same arguments as in Theorem \ref{thm:main_approx}, to derive the approximation guarantees of Mechanism \ref{alg:final_mech_fix}:

    \begin{align*}
    v(M^*) & \leq v(M^t)+2\cdot v(G(v_{max}))+ v(M^G)\\
    & \leq  v(M^*_G) + 2\cdot v(G(v_{max}))+v(M^G)\\
    & \leq 3\cdot v(M^G)+  3\cdot v(G(v_{max}))\\
    & \leq 3\cdot (v(M_{mod}^G)+2\cdot v(G(v_{max})))+  3\cdot v(G(v_{max}))\\
    & \leq 3\cdot v(M_{mod}^G)+  9\cdot v(G(v_{max}))\\
    & = 12\cdot (\frac{1}{4}\cdot v(M^G)+\frac{3}{4}\cdot v(G(v_{max})))\\
    & = 12\cdot SW_{mod}^R, 
\end{align*}

where $SW_{mod}^R$ is the social welfare of the Mechanism \ref{alg:final_mech_fix}.

\paragraph{Lower Bound of Mechanism \ref{alg:final_mech_fix} for the position auctions with a capacity constraint setting.}

In order to derive a lower bound, we need to exploit the asymmetry of the probability distribution over Mechanism \ref{alg:augm_greedy_fix} and $G(v_{max})$. The lower bound results from an instance where Mechanism \ref{alg:augm_greedy_fix} achieves a nearly optimal solution while the welfare of $G(v_{max})$ is as small as possible. Consider an instance with $n=k+1$ agents, capacity $W=k$ and $k$ positions with monotonically decreasing CTRs set as $a_j = (1-\epsilon_j) \cdot a_1$, for $j = 2, \dots, k$, with $\epsilon_2< \dots < \epsilon_k \ll 1 $. 
There are $k$ small agents, each with a value-size pair $(v_i,s_i) = (V,1)$,  with density $V$. 
Additionally, there is one heavy agent with value $v_{k+1}=V+1$ and size $s_{k+1}=k$. Mechanism \ref{alg:augm_greedy_fix} assigns the $k$ small agents to the $k$ positions, achieving social welfare equal to $OPT = k \cdot V$. The algorithm $G(v_{max})$ takes the heavy agent which consumes all the capacity, with social welfare $V+1$. The approximation ratio is then
\begin{equation*}
    \frac{\mathbb{E}[ALG]}{OPT} = \frac{\frac{1}{4} \cdot k \cdot V + \frac{3}{4}\cdot (V+1)}{k \cdot V} = \frac{1}{4} + \frac{3}{4} \cdot \frac{V+1}{k \cdot V} \to \frac{1}{4} \text{ as }  k\to \infty
\end{equation*}

\paragraph{Lower Bound of Mechanism \ref{alg:final_mech_fix} for the matching with a capacity constraint problem.}

By a similar reasoning to the proof for the position auctions with a capacity constraint setting, the lower bound would result from an instance where Mechanism \ref{alg:augm_greedy_fix} results in a 2-approximation of the optimal solution while $G(v_{max})$ is as small as possible.
Consider an instance with $n$ agents of sizes $s_1=s_2= \dots = s_n = s$ such that $\sum\limits_{i=1}^k  s_i <W$, capacity $W = k$ and $k$ positions, where $k$ is even.
For each agent $i \in [k]$ where $i$ is odd, the values $v_{ij}$ are as follows: 
\begin{equation*}
    v_{ij} = \begin{cases}
       V, & \text{if }  j=i \\
         V+ \epsilon, & \text{if } j=i+1 \\
        0, & \text{otherwise}
    \end{cases}
\end{equation*}
For each $i \in [k]$ where $i$ is even, the values are as follows:
\begin{equation*}
    v_{ij} = \begin{cases}
        V, & \text{if }  j=i \\
        0, & \text{otherwise}
    \end{cases}
\end{equation*}
Mechanism \ref{alg:augm_greedy_fix} first assigns every odd agent $i$ to position $i+1$. After this step, each even agent $i+1$ finds her only valued position occupied by an agent of strictly higher value and cannot replace her. Therefore, only $k/2$ odd agents can be assigned, yielding welfare $k/2 (V + \epsilon)$. Meanwhile, $G(v_{max})$ outputs a single agent of value $V+\epsilon$, while the optimal solution assigns $k/2$ odd agents to positions of odd index and $k/2$ even agents to positions of even index, for a total welfare of $k \cdot V$.
Overall,
\begin{equation*}
    \frac{\mathbb{E}[ALG]}{OPT} = \frac{\frac{1}{4} \cdot \frac{k}{2} \cdot (V+\epsilon) + \frac{3}{4}\cdot (V+\epsilon)}{k \cdot V} = \frac{1}{4} + \frac{3}{4} \cdot \frac{V+\epsilon}{k \cdot V} \to \frac{1}{8} \text{ as } \epsilon \to 0 \text{  and  } k\to \infty
\end{equation*}
\end{proof}

\newpage
%%%%%%%%%%%%%%%%%%%%%%%%%%%
\section{Conclusions}\label{sec:conclusion}
%%%%%%%%%%%%%%%%%%%%%%%%%%%

Motivated by applications in position auctions, in this work we studied mechanism design for matching problems, when there is also a single capacity constraint. As our results suggest, it is possible to have universally truthful mechanisms, that provide constant approximation guarantees to the social welfare. There are several interesting questions that remain open. In particular, as the mechanism that we design is randomized, it would be nice to explore whether it is possible to come up with deterministic truthful mechanisms that have similar guarantees, or even examine if it is within reach to get polynomial-time approximation schemes that are also monotone. Finally, it would be both interesting and challenging to go beyond the setting of CTRs that depend only on the positions, and explore the more general version of the problem, where agents continue to be single-parameter but the CTR of a position is affected by the ad that is assigned to it. This is the mechanism design equivalent to the algorithmic problem presented in Section \ref{sec:alg}, that is much more general than the one we study in this work, and which is not clear whether the presented techniques continue to yield monotone mechanisms.

\bibliographystyle{unsrt}  
\bibliography{bibliography}

\newpage 
% Appendix
\appendix

\section*{Appendix}

%%%%%%%%%%%%%%%%%%%%%%%%%%%%%%%%%%%%%%%%%%%%%%%%%%%%%%%%%%%
\section{Limitations of Existing Techniques}
\label{app:limitation-existing}
%%%%%%%%%%%%%%%%%%%%%%%%%%%%%%%%%%%%%%%%%%%%%%%%%%%%%%%%%%%

We specify the standard greedy algorithms and their composition and subsequently show why they fail to achieve a constant approximation in the matching with a capacity constraint problem:

\paragraph{Greedy-by-Density.} Consider the monotone density ranking $r_{d}$ that orders densities $d_{ij} = \frac{v_{ij}}{s_i}$ decreasingly. We now define the greedy allocation rule $G(r_{d})$ as follows: greedily assign the agent-position pair $(i,j)$ with the highest density $d_{ij}$ as long as the capacity constraint is not violated and agent $i$ has not been previously allocated. Position $j$ is now considered allocated, and for every $i' \neq i$, the respective densities $d_{i'j}$ are removed from the ordering.

\paragraph{Greedy-by-Value.} Consider the monotone value ranking $r_{v}$ that orders values $v_{ij}$ decreasingly. We now define the greedy allocation rule $G(r_{v})$ as follows: greedily assign the agent-position pair $(i,j)$ with the highest value $v_{ij}$ as long as the capacity constraint is not violated and agent $i$ has not been previously allocated. Position $j$ is now considered allocated, and for every $i' \neq i$, the respective $v_{i'j}$ are removed from the ordering.

\begin{definition}
 The $\max(G(r_{d}),G(r_v))$ allocation rule runs the Greedy-by-Density and Greedy-by-Value subroutines and outputs the allocation with the highest social welfare between the two.
\end{definition}

This is a natural modification of the max-greedy approach that provides a 2-approximation for the knapsack problem. However, this approximation guarantee does not hold for for the matching with a capacity constraint problem.

\begin{proposition}\label{prp:bad_instance}
The allocation rule $\max(G(r_{d}),G(r_v))$ achieves a $k$-approximation of the optimal social welfare, where $k$ is the number of positions. This bound is  asymptotically tight (up to a factor of 2), even in the position auctions with a capacity constraint setting.% even when the CTRs are independent, namely $a_{ij} = a_j$ for all agents $i \in [k]$. 
\end{proposition}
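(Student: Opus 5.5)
The statement has two parts: an upper bound of $k$ on the ratio of $\max(G(r_{d}),G(r_v))$, and a matching lower-bound family of order $k/2$ that already lives in the position auctions setting.

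\textbf{Upper bound.} The plan is to compare the optimum against the single best pair.
\begin{itemize}
\item An optimal solution $M^*$ is a matching, so it uses at most $k$ pairs, one per position. Each pair $(i,j)\in M^*$ fits alone, since $s_i\le W$. Hence $v(M^*)\le k\cdot \max_{(i,j):\,s_i\le W} v_{ij}$.
\item Greedy-by-Value starts at the highest-value pair and skips only pairs violating the capacity. So the first pair it accepts is exactly the highest-value feasible pair, giving $v(G(r_v))\ge \max_{(i,j):\,s_i\le W} v_{ij}$.
\item Combining the two, $\max(G(r_{d}),G(r_v))\ge v(G(r_v))\ge v(M^*)/k$.
\end{itemize}
This part is routine.

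\textbf{Lower bound.} I need a position auction instance where both greedies collect roughly one ``unit'' of value while the optimum collects about $k/2$ units. That gives the ``up to a factor of 2''.

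\begin{itemize}
\item \emph{Trapping Greedy-by-Value.} Add a heavy agent $h$ with value $V+\epsilon$ and size $W$. Greedy-by-Value takes $h$ at position 1 and fills the whole capacity, so it collects about $V$.
\item \emph{Trapping Greedy-by-Density.} Add a tiny agent $t$ with value $\delta$ and negligible size, so its density is enormous. Greedy-by-Density places $t$ in position 1 first.
\item \emph{Medium agents.} Add about $k$ medium agents, each with value $V$ and size close to $W/2$, arranged so that only about two fit alongside any configuration greedy reaches. This alone caps density-greedy at about $2V$, which is not enough; position 1 must also be wasted.
\item \emph{Fixing the medium agents.} To prevent filling, I will instead give the medium agents size slightly above $W/k$ with value $V$. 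Then all but one of them fit, so the optimum is about $(k-1)V$.
\item Using CTRs $a_1=1$ and $a_j$ close to $1$, with $h$'s size chosen appropriately, the density order should place $t$, then a moderate agent of size about $W/2$ and value about $V$, before the small agents, so that the small agents no longer fit.
\end{itemize}

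The main obstacle is tuning these parameters consistently. The CTRs must be strictly decreasing. The density ordering must put the blocking agent ahead of the many small agents, yet the blocking agent's value must stay about $V$ while it consumes about half (or all) of the remaining capacity. Concretely, I would set the blocking agent's value to $V(1+\eta)$ and its size to $W-\sigma$, and the small agents' size to slightly more than $\sigma$. The small agents' density $V/(\sigma+)$ then exceeds the blocker's density, which breaks the intended order. The fix is to make the small agents' values less than $V$ times their size relative to the blocker.

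The cleanest fix I would try is to scale everything so that all densities are nearly equal: every agent has density about $1$, with the blocker marginally ahead. Then Greedy-by-Density takes the blocker, of size about $W$, and gets value about $W$. The optimum takes $k$ small agents, each of size $W/k$ and value $W/k$, for a total of about $W$ only; that fails.

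So the realistic target is the factor-$2$ slack. With near-equal densities, the optimum cannot beat density-greedy by much, and the gap must come from the matching structure instead. I therefore expect to pursue an instance where density-greedy wastes position 1 on the tiny agent and value-greedy wastes capacity on the heavy agent, and to show a ratio of about $k/2$ by splitting the capacity into halves. Verifying that the density ordering works out under the common CTRs is the delicate step.
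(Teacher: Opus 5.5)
Your upper bound is correct and is the paper's argument: value-greedy's first accepted pair is the best feasible pair, and an optimal matching has at most $k$ pairs. The gap is the tightness half. The proposal never gives a concrete instance or checks a run of either greedy, so the lower bound is not proved. Moreover, the plan you end on cannot be completed. A single tiny agent $t$ removes only one position from play. With near-uniform CTRs, density-greedy then places the medium agents in positions $2,3,\dots$ and loses only about a $1/k$ fraction of the optimum. With steep CTRs, shifting the medium agents down one slot costs at most about $V a_1$, which value-greedy's best pair already collects, so you get at best a constant ratio. Wasting capacity is ruled out by the knapsack fact you ran into yourself: a blocker that precedes the small agents in density order and fills a constant fraction of $W$ must carry a comparable fraction of the optimal value. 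Neither ``splitting the capacity into halves'' nor tuning $h$ changes this.

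The missing idea is to attack the matching constraint with many tiny agents, not one. The resource density-greedy squanders is the set of good positions, not capacity. The paper takes $k=2\ell$ and $W=\ell$, with three groups of agents: a heavy agent with $s_0=\ell$, $v_0=1+\epsilon$; $\ell$ unit agents with $s_i=v_i=1$; and $\ell$ tiny agents with $s_i=\epsilon$, $v_i=2\epsilon$. The CTRs are $a_j=k-j+1$ for $j\le\ell$ and $a_j=\epsilon(k-j+1)$ for $j>\ell$. Since $2a_\ell>a_1$, every tiny pair at positions $1,\dots,\ell$ precedes every unit pair in density order. So density-greedy fills the top $\ell$ positions with tiny agents and can only place unit agents where CTRs are scaled by $\epsilon$, earning $O(\epsilon k^2)$. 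Value-greedy takes the heavy agent at position $1$, exhausts $W$, and earns $k(1+\epsilon)$. The optimum puts the unit agents in positions $1,\dots,\ell$ for $\ell(3\ell+1)/2=\Theta(k^2)$, so with $\epsilon\le 1/k$ the ratio is $\Theta(k)$.

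The same idea also rescues your near-uniform-CTR framework. Add $k$ tiny agents of value $\delta\le V/k$, with size small enough that (i) their density at position $k$ exceeds every medium agent's density at position $1$, and (ii) their total size is at most $W$. Density-greedy then fills all $k$ positions with tiny agents, value-greedy is trapped by $h$, and the optimum's $k$ medium agents of size $W/k$ give a ratio tending to $k$.
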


\begin{proof}
The upper bound comes from the simple observation that the output of $G(r_v)$ always includes the feasible agent-position pair $(i,j)$ with the highest value $v_{ij}$ subject to $s_i \leq W$, thus the welfare of $G(r_v)$ is at least as high as $v_{max} = \max_{i,j: s_i \leq W} v_{ij}$. Since any feasible allocation is a matching assigning at most one agent per position and an instance with $k$ agents can have at most $k$ positions, the optimal welfare can be at most equal to $k \cdot v_{max}$. Thus, the welfare of $\max(G(r_{d}),G(r_v))$ is a $k$-approximation of the optimal solution.

Regarding the lower bound, consider an instance with $k = 2 \ell$ positions and $k + 1 = 2 \ell + 1$ agents and assume a total capacity of $W = \ell$. We partition the agents into three sets as follows:
\begin{itemize}
    \item $S_0$: one single heavy agent with $s_0 = W = \ell$ and $v_0 = 1 + \epsilon$
    \item $S_1$: $\ell$ agents, each with $s_i = 1$ and $v_i = 1$
    \item $S_2$: $\ell$ agents, each with $s_i = \epsilon$ and $v_i = 2 \epsilon$
\end{itemize}
The densities satisfy
$d_{S_2} (j) = 2 a_j > d_{S_1} (j) = a_j > d_{S_0} (j) = a_j (1+\epsilon)/\ell$ for all $j$.
(To resolve ties, we can apply an arbitrarily small perturbation on the values of agents of sets $S_1$ and $S_2$. We can do this by setting the values as $v_i = 1 - \epsilon \cdot i$ for agents $i \in \{1,\dots,\ell\}$ in $S_1$, and similarly $v_i = 2\epsilon (1 - \epsilon \cdot i)$ for agents $i \in \{\ell+1,\dots,2\ell\}$ in $S_2$, keeping the original sizes. This preserves the strict density separation between the two sets, and the analysis remains unchanged. We therefore ignore ties for the sake of simplicity). We set the CTRs as follows: $a_j = k - j + 1$ for $j \in [1,\dots,\ell]$, and $a_j = \epsilon \cdot (k - j + 1)$ for $j \in [\ell+1, \dots, 2 \ell]$, such that $a_1 > a_2 > \dots > a_{2 \ell}$, dropping by a factor of $\epsilon$ after the first $\ell$ positions.

Now it is easy to see that the optimal solution assigns the $S_1$ agents to positions $1, \dots, \ell$. This results in total welfare of 
$$\sum_{j=1}^{\ell} (k-j+1) \cdot 1 = \ell \cdot (k+1) - \frac{\ell \cdot (\ell + 1)}{2} = \frac{\ell (3 \ell + 1)}{2} = \Theta(k^2)$$

Meanwhile, $G(r_{v})$ assigns the $S_0$ to position 1 with welfare $a_1 \cdot (1+\epsilon) = k \cdot (1+\epsilon) = \Theta(k)$ and halts, since $s_0 = W$. On the other hand, $G(r_{d})$ first assigns the $S_2$ agents to positions $1,\dots,\ell$ since they have the highest densities. After this point, any remaining assignment can only use positions $j \geq \ell + 1$, adding at most $\ell - 1$ agents from $S_1$. 
%Since the remaining capacity after allocating $\ell$ agents is $W - \ell \cdot \epsilon$, at most $\ell - 1$ agents from $S_1$ can be added. 
The total welfare achieved by $G(r_d)$ is then
$$\sum_{j=1}^{\ell} a_j \cdot 2 \epsilon + \sum_{j=\ell+1}^{2 \ell -1} \epsilon (k - j + 1) = 2 \epsilon \sum_{j=1}^{\ell} a_j + \epsilon \cdot O(\ell^2) = O(\epsilon k^2)$$
For any sufficiently small parameter $\epsilon \leq 1/k$, the total welfare of $\max(G(r_{d}),G(r_v))$ is $O(k)$. Hence $max(G(r_d), G(r_v)) = \Theta (k)$ and the approximation ratio is $\Theta(k^2) / \Theta(k) = \Theta(k)$, matching the upper bound asymptotically. See also \Cref{fig:lin-apx} for an illustrative example with $\ell = 2$ (the last two positions with small CTRs are not depicted in the graph for visual clarity).
\end{proof}

\begin{figure}[t]
    \centering
    \begin{tikzpicture}[scale=0.7, auto, node distance=5cm, node_style/.style={circle, draw, minimum size=0.6cm}, edge_style/.style={draw=gray, thin}]
    
            \node[node_style] (u1) at (0,0) {$v_0$};
            \node[node_style] (u2) at (0,-3) {$v_1$};
            \node[node_style] (u3) at (0,-6) {$v_2$};
            \node[node_style] (u4) at (0,-9) {$v_3$};
            \node[node_style] (u5) at (0,-12) {$v_4$};
            
            \node[node_style] (v1) at (5,-3) {$a_1$};
            \node[node_style] (v2) at (5,-9) {$a_2$};
        
            \node at (-2,0) {size = $2$};
            \node at (-2,-3) {size = $\epsilon$}; 
            \node at (-2,-6) {size = $1$};
            \node at (-2,-9) {size = $\epsilon$}; 
            \node at (-2,-12) {size = $1$}; 

            % Heaviest bidder
            \draw[edge_style]  (u1) edge node[sloped, above left, xshift=-0.5cm] {\small $2 \cdot 1$} (v1);

            % epsilon-sized agent
            \draw[edge_style]  (u2) edge node[sloped, above left, xshift=-0.5cm] {\small $2 \cdot \epsilon$} (v1);
        
            % unit-sized agent
            \draw[edge_style]  (u3) edge node[sloped, above left, xshift=-0.1cm] {\small $2 \cdot (1-\epsilon)$} (v1);
            \draw[edge_style]  (u3) edge node[sloped, above left, xshift=-0.2cm] {\small $1 \cdot (1-\epsilon)$} (v2);
                
            % epsilon-sized agent
            \draw[edge_style]  (u4) edge node[sloped, above left, xshift=-0.4cm] {\small $1 \cdot \epsilon$} (v2);
            
            % unit-sized agent
            \draw[edge_style]  (u5) edge node[sloped, above left, xshift=-1.4cm] {\small $2 \cdot (1-2 \epsilon)$} (v1);
            \draw[edge_style]  (u5) edge node[sloped, above left, xshift=0.3cm] {\small $1 \cdot (1-2 \epsilon)$} (v2);
        
        \end{tikzpicture}
        \caption{An instance where $\max(G(r_v), G(r_d))$ results in a linear approximation.}
        \label{fig:lin-apx}
    \end{figure}

%%%%%%%%%%%%%%%%%%%%%%%%%%%%%%%%%%%%%%%%%%%%%%%%%%%%%%%%%%%
\section{Illustrative Example on Mechanism \ref{alg:augm_greedy_fix}}
\label{app:examp}
%%%%%%%%%%%%%%%%%%%%%%%%%%%%%%%%%%%%%%%%%%%%%%%%%%%%%%%%%%%

\begin{proposition}\label{prop:stop}
The stopping rule of Mechanism \ref{alg:augm_greedy_fix} prevents non-monotone instances.
\end{proposition}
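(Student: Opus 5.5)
This proposition lives in the appendix titled ``Illustrative Example on Mechanism \ref{alg:augm_greedy_fix}'', so I will argue it by example. I take the non-monotone instance of Proposition~\ref{prop:nonmon} and rerun it under Mechanism~\ref{alg:augm_greedy_fix}. The instance has agents $(5,1),(4+2\epsilon,1),(3,1),(4,2),(5.5,4)$, CTRs $1,1-\epsilon,1-2\epsilon,0.5$ and $W=6$. I will trace the run under both the original profile and the profile with agent 3's bid raised to $4+\epsilon$. The aim is to show that agent 3 is not pushed to a worse outcome. The general guarantee is already Theorem~\ref{thm:mon}; the example only isolates the role of the new stopping condition $w-s_i<0$.

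\textbf{Step 1: order the pairs.} Compute the densities $d_{ij}=v_ia_j/s_i$. The unit-size agents 1, 2 and 3 come first, agent 4 (density about $2a_j$) next, and agent 5 (density about $1.375a_j$) last. Pairs for position $4$ sit lower in the list, since $a_4=0.5$.

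\textbf{Step 2: trace the original profile.} Agents 1, 2 and 3 take positions 1, 2 and 3, leaving $w=3$. Agent 4 then meets agent 3 at position 3. Since $w-s_4=1\ge 0$, the new rule does not stop here, and the replacement goes ahead as before, leaving $w=2$. The pair $d_{34}$ is reinserted at the top of the list and by Corollary~\ref{cor: re-enter} agent 3 takes position 4, leaving $w=1$. When agent 5 comes up, $w-s_5<0$, so the run stops. Agent 3 ends at position 4, exactly as under Mechanism~\ref{alg:augm_greedy}.

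\textbf{Step 3: trace the deviating profile.} After agents 1, 2 and 3 are placed, $w=3$. Agent 4 fails to replace agent 3 because $4<4+\epsilon$. Because position 4 is empty, agent 4 is then placed there, since $w-s_4=1\ge 0$, leaving $w=1$. Next agent 5's pair for position 1 is reached. Here the crucial difference appears: the new condition checks $w-s_5=1-4<0$ \emph{before} any replacement, so the mechanism stops instead of starting the chain $5\to1\to2\to3$. Agent 3 stays at position 3, which is better than position 4.

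\textbf{Main obstacle.} The hard part is getting the pair order in Step 3 exactly right. I need to confirm that agent 4's pairs, including the one for position 4, really precede agent 5's pair for position 1, by comparing $4\cdot0.5/2=1$ with $5.5/4=1.375$. If they do not, I will adjust the sizes or CTRs of the instance slightly so that the stop occurs at agent 5 with agent 3 still seated. Once the order is fixed, the conclusion follows by direct comparison of the two outputs, and monotonicity holds on this instance.
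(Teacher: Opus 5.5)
\textbf{Gap: your argument addresses a different feature of the mechanism.} In the paper, the content of this proposition is that Mechanism~\ref{alg:augm_greedy_fix} \emph{halts} at the first pair that does not fit, instead of skipping that pair and continuing down the density list. The proof shows this is indispensable by exhibiting a non-monotone instance for the skip-and-continue variant:
\begin{itemize}
\item The instance has agents $(12,2),(15,3),(16,4),(18,6)$, CTRs $a_1=1$, $a_2=0.45$, and $W=10$.
\item With bid $12$, agents $2,3,4$ successively take position 1 before $d_{12}$ is reached. Agent 1 then lands in position 2 next to the heavy agent 4, and nothing else fits.
\item With bid $14$, $d'_{12}$ moves ahead of $d_{41}$, so agent 1 is seated in position 2 next to the lighter agent 3. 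Agent 4 then does not fit ($s_1+s_3+s_4>W$).
\item A mechanism that keeps parsing now lets agent 2 displace agent 1 from position 2 ($s_1+s_2+s_3<W$), leaving her unassigned. Halting at $d_{41}$ keeps her in position 2.
\end{itemize}
Your instance cannot illustrate this. In the deviating profile, a skip-and-continue variant would simply pass over agent 5's pairs and later put agent 4 in position 4, so agent 3 keeps position 3 whether or not the mechanism halts. Likewise, in the original profile agent 3 ends in position 4 either way. What your example isolates is only the switch of the feasibility test from $w+s_{i'}-s_i$ to $w-s_i$. Moreover, one instance on which the modified mechanism behaves monotonically does not show that any rule prevents non-monotonicity. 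The informative content here is a counterexample for the variant lacking the rule; the positive direction is Theorem~\ref{thm:mon}.

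\textbf{Error in the trace.} Your Step 3 goes wrong at exactly the point you flagged, and your Step 1 claim that all of agent 4's pairs precede agent 5's is false: $d_{44}=4\cdot 0.5/2=1<d_{51}=1.375$. So agent 4 is never placed in position 4. The run reaches $d_{51}$ with $w=3$ and stops because $w-s_5=-1<0$. No adjustment of the instance is needed. In fact, this corrected trace is what makes your comparison meaningful:
\begin{itemize}
\item With $w=1$, as in your version, Mechanism~\ref{alg:augm_greedy} would also stop at $d_{51}$ ($w+s_1-s_5=-2<0$), so nothing would distinguish the two rules.
\item With $w=3$, Mechanism~\ref{alg:augm_greedy}'s test gives $w+s_1-s_5=0\ge 0$ and triggers the chain of Proposition~\ref{prop:nonmon}, while the new test stops.
\end{itemize}
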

\begin{proof}
Suppose that Mechanism \ref{alg:augm_greedy_fix} does not stop when it encounters the first agent that does not fit, but continues to parse the density list, assigning any further agents that can still fit. The next instance shows that, under this relaxation, Mechanism \ref{alg:augm_greedy_fix} cannot be non-monotone. Consider the following instance with 4 agents, 2 positions and capacity $W=10$. The agents are $(v_1, s_1)=(12,2)$, $(v_2, s_2)=(15,3)$, $(v_3, s_3)= (16,4)$ and $(v_4, s_4)=(18,6)$ and the CTRs are $a_1=1, a_2=0.45$. 
%The density list is $d_{11}>d_{21}>d_{31}>d_{41}>d_{12}>d_{22}>d_{32}>d_{42}$.
The mechanism will first assign agent 1 to position 1, followed by three replacements, by agents 2, 3 and 4, resulting in agent 4 occupying the position. Subsequently, agent 1, now unassigned, will be assigned to position 2. At this point, no further replacement can take place since, $s_4+s_1+s_2 > W$ and $s_4+s_1+s_3>W$. Suppose now that agent 1 increases her bid to $v'_1=14$. 
%Then, the density list will be $d'_{11}>d_{21}>d_{31}>d'_{12}>d_{41}>d_{22}>d_{32}>d_{42}$. 
The mechanism will first assign agent 1 to position 1, followed by two replacements, by agents 2 and 3, resulting in agent 3 occupying the position. In the next step, agent 1 will be assigned to position 2. Afterwards, agent 4 is examined for position 1, but $s_1+s_3+s_4>W$, hence agent 4 will replace agent 3. Without stopping, the next agent 2, will replace agent 1 for position 2, since $s_1+s_2+s_3 < W$. Therefore, agent $i$, by increasing her bid resulted in a worse position. 
\end{proof}

\end{document}